\definecolor{cvprblue}{rgb}{0.21,0.49,0.74}
\definecolor{cYellow}{HTML}{FFFFCC}
\definecolor{cRed}{HTML}{FFCCCC} 
\definecolor{cGrey}{HTML}{F3F7F2} % {E5E4E2}
\definecolor{cGreen}{HTML}{339933}
\begin{document}

%%
%% The "title" command has an optional parameter,
%% allowing the author to define a "short title" to be used in page headers.
\title{Who Stole Your Data? A Method for Detecting Unauthorized RAG Theft}

%%
%% The "author" command and its associated commands are used to define
%% the authors and their affiliations.
%% Of note is the shared affiliation of the first two authors, and the
%% "authornote" and "authornotemark" commands
%% used to denote shared contribution to the research.
\author{Peiyang Liu}
\email{liupeiyang@pku.edu.cn}
\orcid{0000-0003-3658-9147}
\affiliation{%
  \institution{National Engineering Research Center for Software Engineering, Peking University}
  \city{Beijing}
  \country{China}
}

\author{Ziqiang Cui}
\email{ziqiang.cui@my.cityu.edu.hk}
\affiliation{%
  \institution{City University of Hong Kong}
  \city{Hong Kong}
  \country{China}
}

\author{Di Liang}
\email{liangd17@fudan.edu.cn}
\affiliation{%
  \institution{Fudan University}
  \city{Shanghai}
  \country{China}
}

\author{Wei Ye}
\authornote{Corresponding Author.}
\email{wye@pku.edu.cn}
\affiliation{%
  \institution{National Engineering Research Center for Software Engineering, Peking University}
  \city{Beijing}
  \country{China}
}

%%
%% By default, the full list of authors will be used in the page
%% headers. Often, this list is too long, and will overlap
%% other information printed in the page headers. This command allows
%% the author to define a more concise list
%% of authors' names for this purpose.
\renewcommand{\shortauthors}{Liu et al.}

%%
%% The abstract is a short summary of the work to be presented in the
%% article.
\begin{abstract}
    Retrieval-augmented generation (RAG) is widely used to address hallucinations and outdated data issues in Large Language Models (LLMs). While RAG significantly enhances the credibility and real-time relevance of generated content by incorporating external data sources, it has also enabled unauthorized data appropriation at an unprecedented scale. Detecting such unauthorized data usage presents a critical challenge that demands innovative solutions.
    This paper addresses the challenge of detecting unauthorized RAG-based content appropriation through two key contributions. First, we introduce RPD, a novel dataset specifically designed for RAG plagiarism detection that overcomes the limitations of existing resources. RPD ensures recency, diversity, and realistic simulation of RAG-generated content across various domains and writing styles by modeling diverse professional domains and writing styles. Second, we propose a novel dual-layered watermarking system that sequentially embeds protection at both semantic and lexical levels, creating a robust detection mechanism that remains effective even after LLM reformulation. This system is complemented by an Interrogator-Detective framework that strategically probes suspected RAG systems and applies statistical hypothesis testing to accumulated evidence, enabling reliable detection even when individual watermark signals are weakened during the RAG process.
    Extensive experiments validate our approach's effectiveness across varying query volumes, defense prompts, and retrieval parameters. Our method demonstrates exceptional resilience against adversarial evasion techniques while maintaining high-quality text output. The dual-layered approach provides complementary protection mechanisms that address the fundamental challenges of RAG plagiarism detection, including LLM reformulation, fact redundancy, and adversarial evasion. This work provides a foundational framework for protecting intellectual property in the era of retrieval-augmented AI systems.
    Our source code and dataset are publicly available at \url{https://anonymous.4open.science/r/RPD-7E98}\text{.}
\end{abstract}

%%
%% The code below is generated by the tool at http://dl.acm.org/ccs.cfm.
%% Please copy and paste the code instead of the example below.
%%
\begin{CCSXML}
<ccs2012>
   <concept>
       <concept_id>10010147.10010178.10010179.10010182</concept_id>
       <concept_desc>Computing methodologies~Natural language generation</concept_desc>
       <concept_significance>500</concept_significance>
       </concept>
 </ccs2012>
\end{CCSXML}

\ccsdesc[500]{Computing methodologies~Natural language generation}

%%
%% Keywords. The author(s) should pick words that accurately describe
%% the work being presented. Separate the keywords with commas.
\keywords{Retrieval Augmented Generation, Data Protection}

\received{20 February 2007}
\received[revised]{12 March 2009}
\received[accepted]{5 June 2009}

%%
%% This command processes the author and affiliation and title
%% information and builds the first part of the formatted document.
\maketitle

\section{Introduction}
Large Language Models have transformed how we generate and interact with textual content \cite{achiam2023gpt,liu2024deepseek,yang2025qwen3}. Despite their capabilities, these models face inherent limitations in accessing up-to-date information and specialized knowledge beyond their training data. Retrieval-Augmented Generation (RAG) \cite{lewis2020retrieval,gao2023retrieval} has emerged as an effective solution to this challenge by dynamically incorporating external knowledge sources into the generation process. While RAG has significantly enhanced LLM capabilities across numerous applications, it has simultaneously raised important ethical and legal concerns regarding intellectual property rights.

\subsection{The Challenge of Unauthorized Content Appropriation}
RAG systems can retrieve valuable information created by content producers, incorporate it into their outputs, and serve it to users without proper attribution or compensation. This raises significant questions about ownership rights and the traceability of information as it moves through these systems.

The unauthorized appropriation of content through RAG systems represents a significant challenge at the intersection of technology, ethics, and law. Content creators—ranging from news organizations and academic institutions to individual bloggers—invest substantial resources in producing high-quality information. When this content is seamlessly integrated into LLM outputs without attribution, it not only undermines the creators' economic interests but also erodes the foundation of intellectual property rights that incentivize knowledge creation \cite{ma2022specialization}.

\begin{figure*}[t]
    \centering
    \includegraphics[scale=0.5]{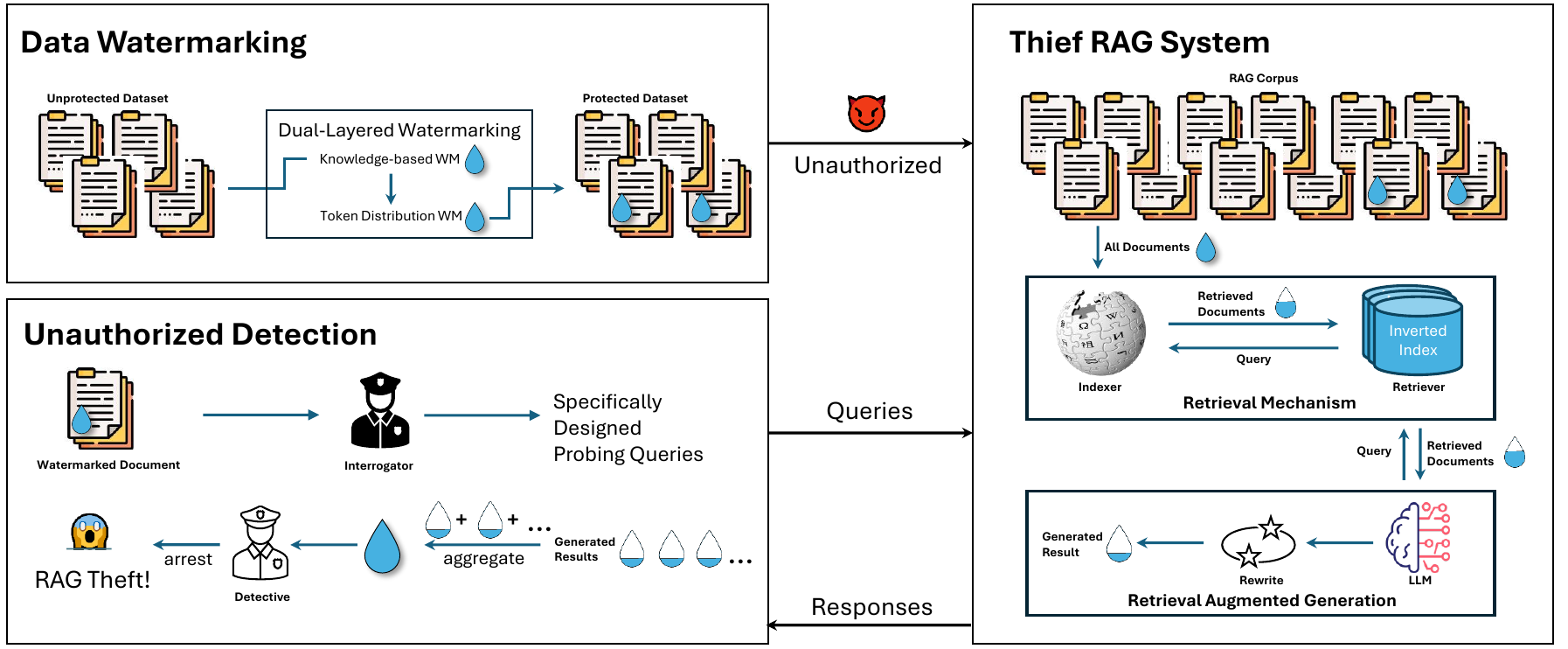}
    \caption{Overview of our proposed pipeline for preventing unauthorized RAG misuse. First, we apply our Dual-Layered Watermarking method to embed watermarks into the documents of the protected dataset. When a Thief RAG System illicitly accesses our dataset, the watermark propagates through the RAG pipeline into the LLM's generated outputs. Although the watermark inevitably weakens during propagation, in the detection phase, the Interrogator strategically crafts queries that increase the likelihood of the RAG System retrieving watermarked documents. By submitting multiple queries to the Thief RAG System, the aggregated watermark signals accumulate to a detectable level, enabling statistical hypothesis testing to determine whether unauthorized misuse has occurred.}
    \label{figure_framework}
\end{figure*}

\subsection{The Detection Challenge}
Detecting RAG-based content appropriation presents unique technical challenges that distinguish it from traditional plagiarism detection:

\begin{itemize}
    \item \textbf{LLM Reformulation:} Unlike direct copying, RAG systems typically use retrieved content as context for generation rather than verbatim reproduction \cite{yu2024evaluation}, resulting in paraphrased or synthesized outputs that maintain the core information while altering the linguistic surface.

    \item \textbf{Fact Redundancy:} Multiple sources may contain identical factual information, making it difficult to determine which specific source was used in the retrieval process \cite{jovanovic2025ward}.

    \item \textbf{Adversarial Evasion:} As detection methods emerge, malicious actors may develop countermeasures specifically designed to evade these techniques \cite{jovanovic2024watermarkstealing,lyu2025crud}.

\end{itemize}

Despite these challenges, the ability to detect unauthorized RAG usage is crucial for protecting intellectual property, ensuring proper attribution, and maintaining the integrity of the information ecosystem. Current approaches to this problem remain limited, with a notable absence of specialized datasets and detection methodologies tailored to the unique characteristics of RAG-based content appropriation.

\subsection{Towards Effective RAG Plagiarism Detection}
To address these challenges, we recognize the need for two fundamental components: (1) a specialized dataset that realistically captures the nuances of RAG-generated content across diverse domains and writing styles, and (2) a robust detection methodology capable of identifying unauthorized content use even after LLM reformulation.

Traditional watermarking methods for LLMs \cite{pmlr-v202-kirchenbauer23a,liu2024survey} are vulnerable to adversarial rewriting by illegal RAG systems, while existing fact-based techniques \cite{yang2023watermarking,chang-etal-2024-postmark} fail to address the data redundancy issues commonly found in RAG systems. Similarly, existing plagiarism detection datasets fail to account for the unique characteristics of RAG-based content appropriation, particularly the problems of fact redundancy caused by diverse writing styles among different creators in real-world scenarios \cite{jovanovic2025ward}.

Our approach introduces a novel dual-layered watermarking system that operates sequentially to create robust protection against unauthorized RAG usage. We develop a sophisticated framework that embeds distinctive knowledge-based watermarks at the semantic level while also creating statistical signatures through token distribution manipulation. This is complemented by an innovative Interrogator-Detective framework that strategically probes suspected RAG systems and applies statistical hypothesis testing to detect unauthorized usage. The overview of our method can be found in Figure \ref{figure_framework}. Our dataset creation methodology further enhances this approach by simulating the diversity and complexity of real-world RAG applications.

\subsection{Our Contributions}
This paper addresses the critical gap in RAG plagiarism detection through the following contributions:

\begin{enumerate}
    \item \textbf{Novel Dataset Creation:} We introduce a comprehensive dataset specifically designed for RAG plagiarism detection that overcomes the limitations of existing resources. Our approach leverages the \textit{repliqa} \cite{monteiro2024repliqa} foundation to ensure recency and employs a sophisticated author simulation system that models diverse professional domains, writing styles, and preference dimensions. This methodology produces content that realistically simulates the variability and redundancy found in real-world RAG scenarios.

    \item \textbf{Dual-Layered Watermarking System:} We propose an innovative sequential watermarking approach that embeds protection at both semantic and lexical levels. Our system strategically integrates knowledge-based watermarks with statistical token patterns to create a detection mechanism that:
    1. Remains effective even after LLM reformulation through the RAG pipeline.
    2. Addresses the fact redundancy problem through distinctive knowledge embedding
    3. Provides complementary layers of protection against various evasion techniques.

    \item \textbf{Interrogator-Detective Framework:} We develop a novel detection framework that strategically generates queries to reveal unauthorized data usage and applies rigorous statistical analysis to accumulated evidence, enabling reliable detection even when individual watermark signals are weakened during the RAG process.

    \item \textbf{Comprehensive Evaluation Framework:} We implement a rigorous testing methodology that evaluates our approach across multiple dimensions, including detection accuracy as a function of query volume, resilience against defensive prompting, impact on text quality, and sensitivity to parameter variations.

\end{enumerate}

\section{Related Work}
Our research builds upon and extends several key areas of prior work, including retrieval-augmented generation systems, watermarking techniques for language models, and plagiarism detection methodologies.
\subsection{Retrieval Augmented Generation}
Retrieval Augmented Generation (RAG) has emerged as a powerful paradigm for enhancing LLM capabilities by incorporating external knowledge. The seminal work by \cite{lewis2020retrieval} introduced the RAG framework, demonstrating significant improvements in knowledge-intensive tasks. Recent advancements have focused on optimizing retrieval mechanisms \cite{izacard2023atlas,shi2023replug} and improving the integration of retrieved information into generation processes \cite{asai2023self,zhou2023context}.

\cite{gao2023retrieval} provided a comprehensive survey of RAG techniques, highlighting their applications across various domains and identifying open challenges.
More recently, \cite{yu2024evaluation} proposed evaluation frameworks specifically designed for RAG systems, emphasizing the need for metrics that assess both retrieval quality and generation fidelity. \cite{jiang2023active} introduced active retrieval augmented generation, which dynamically determines when to retrieve information based on model uncertainty.

Despite these advancements, the ethical and legal implications of RAG systems have received comparatively less attention. \cite{ai2024artificial} discussed the potential for copyright infringement in retrieval-based systems, while \cite{henderson2023foundation} highlighted the attribution challenges when LLMs incorporate external knowledge sources.

\subsection{Watermarking Techniques for Language Models}
Watermarking techniques for language models have evolved significantly in recent years. Traditional approaches focused on embedding imperceptible signals within generated text to enable ownership verification and detection of AI-generated content.

\cite{pmlr-v202-kirchenbauer23a} introduced a ``red-green'' watermarking method that biases the token selection process during generation, creating statistical patterns detectable through hypothesis testing. This approach was further refined by \cite{kuditipudi2023robust}, who developed techniques to make watermarks more resilient against adversarial modifications. \cite{liu2024survey} provided a comprehensive overview of watermarking approaches for generative AI, categorizing methods based on their embedding strategies and detection capabilities.

More recently, fact-based watermarking has emerged as a promising direction. \cite{yang2023watermarking} proposed embedding factual modifications within generated text that serve as identifiable markers while maintaining semantic coherence. \cite{chang-etal-2024-postmark} introduced PostMark, a post-processing watermarking technique that preserves the original generation quality while enabling reliable detection. \cite{christ2024undetectable} explored the theoretical limits of watermarking, identifying fundamental trade-offs between detectability and text quality.

However, as noted by \cite{jovanovic2024watermarkstealing}, existing watermarking techniques remain vulnerable to sophisticated evasion strategies, particularly in the context of RAG systems where content undergoes significant transformation during the retrieval and generation process.

\subsection{Plagiarism Detection in AI-Generated Content}
Traditional plagiarism detection methods have primarily focused on identifying verbatim copying or paraphrasing in human-written text. The emergence of AI-generated content has necessitated new approaches that can identify more sophisticated forms of content appropriation.

\cite{sadasivan2023can} investigated the detectability of LLM-generated text, finding that current detection methods struggle with high-quality generations from advanced models. \cite{mitchell2023detectgpt} proposed DetectGPT, a reference-free approach for identifying text generated by large language models based on curvature in the model's log probability space. \cite{krishna2023paraphrasing} examined how paraphrasing affects the detectability of AI-generated text, revealing significant challenges for current detection systems. \cite{anderson2024my} proposes a membership inference attack method against RAG systems, demonstrating its effectiveness across datasets and models while suggesting preliminary defense strategies.
\cite{pang2023black} introduces a black-box membership inference attack framework for fine-tuned diffusion models, achieving high effectiveness across multiple attack scenarios with an AUC of up to 0.95.

In the specific context of RAG systems, \cite{jovanovic2025ward} highlighted the problem of fact redundancy, where multiple sources contain identical factual information, complicating the attribution process. \cite{lyu2025crud} addressed the scalability challenges of detection methods, proposing efficient algorithms for processing large volumes of potentially appropriated content.

\begin{figure*}[t]
	    \centering
	    \includegraphics[scale=0.185]{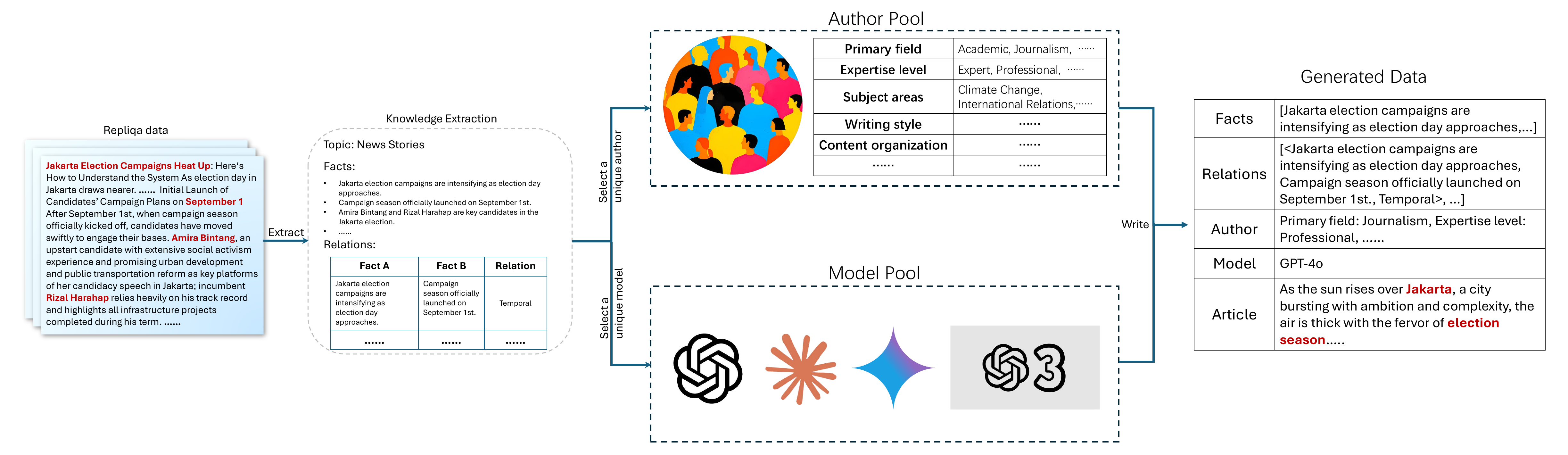}
	    \caption{The construction flowchart of the RPD dataset. The source data is derived from the repliqa dataset, ensuring that the original data does not appear in the training samples of LLMs to prevent data leakage. For each piece of data in the repliqa dataset, we extract Facts and Relations. Based on these facts and relations, we select a unique author role from an ``Author Pool'' composed of writers with diverse styles. Different LLMs then assume these distinct author roles to select Facts and Relations for writing articles. This approach effectively simulates the data redundancy issues caused by authors with varying styles in real-world scenarios. Articles written based on true facts and relations will not conflict with the knowledge already learned by LLMs, and the newly generated data can avoid data leakage.}
	    \label{figure_datagenerate}
\end{figure*}

\section{RAG Plagiarism Detection Dataset}

Current plagiarism detection datasets suffer from several critical limitations when applied to RAG scenarios. First, they typically focus on verbatim copying or simple paraphrasing, whereas RAG systems fundamentally transform content through LLM reformulation while preserving core information. Second, they rarely account for the fact redundancy problem—where multiple legitimate sources may contain identical factual information, making source attribution challenging. Finally, most existing datasets lack the diversity in writing styles, domain expertise, and content organization that characterizes real-world information ecosystems.
To address these issues, we provide a new \textbf{R}AG \textbf{P}lagiarism \textbf{D}etection dataset, named \textbf{RPD}. 

\subsection{Dataset Design Philosophy}

Our dataset design addresses these limitations through a novel approach that simulates the complex interplay between content creators, information structures, and linguistic expression. We built our dataset on three foundational principles:

\begin{enumerate}
    \item \textbf{Factual Consistency with Stylistic Diversity:} Content should maintain consistent core facts while varying significantly in expression across different authors and writing styles.
    
    \item \textbf{Controlled Redundancy:} The dataset should contain deliberate overlaps in factual information across documents, mirroring real-world information ecosystems where multiple sources cover similar topics.
    
    \item \textbf{Recency and Relevance:} To avoid the problem of LLMs recognizing content from their training data, the dataset should contain recent information not included in model training.
\end{enumerate}

\subsection{Dataset Generation Methodology}

To implement these principles, we developed a multi-stage generation process that transforms source documents into a diverse collection of stylistically varied but factually consistent articles.

\subsubsection{Source Selection and Fact Extraction}

We began with the \textit{repliqa} dataset \cite{monteiro2024repliqa}, which provides recent documents across diverse domains. This choice was deliberate—\textit{repliqa} contains information published after the training cutoff dates of current LLMs, ensuring our dataset represents genuinely novel content that models cannot have memorized during pre-training.

As shown in Figure \ref{figure_prompts_fact_author} and \ref{figure_knowledge_example}, from each source document, we extracted two categories of facts:

\begin{itemize}
    \item \textbf{Core Facts:} 5-7 essential pieces of information that capture the document's central claims or findings.
    \item \textbf{Extended Facts:} 8-12 supporting details that provide context or elaboration.
\end{itemize}

Additionally, we identified semantic relationships between facts (causal, temporal, supportive, contradictory, or elaborative), creating a knowledge graph representation of each document. This structured approach allows us to maintain factual consistency while enabling stylistic variation in the regenerated content.

\subsubsection{Author Profile Simulation}

To capture the diversity of real-world content creation, we developed a sophisticated author simulation system with an author pool containing 100 different author styles (as described in Figure \ref{figure_prompts_fact_author} and \ref{figure_author_example}). Each simulated author is defined along multiple dimensions:

\begin{itemize}
    \item \textbf{Domain Expertise:} Primary field (academic, journalism, technical writing, etc.), expertise level, and specific subject areas.
    \item \textbf{Writing Style:} Formality, technical density, narrative approach, sentence structure, and vocabulary preferences.
    \item \textbf{Content Organization:} Structure preferences, detail level, and citation style.
    \item \textbf{Perspective Biases:} Political leaning, technology attitude, and epistemological approach.
\end{itemize}
This multidimensional characterization allows us to generate content that varies not just in surface-level expression but in deeper structural and stylistic elements—mirroring the diversity of real-world content while maintaining factual consistency.

\subsubsection{Multi-Model Content Generation}

To further enhance diversity and realism, we employed multiple state-of-the-art language models (GPT-4o, Claude 3.5 Sonnet, Gemini 2.0, and OpenAI o3) to generate content based on the extracted facts and author profiles. Each model brings its own linguistic patterns and generation tendencies, creating a dataset with natural variation that better represents the heterogeneity of real-world content.

For each source document, we generated multiple derivative articles that:

\begin{itemize}
    \item Incorporate all core facts and a subset of extended facts (specifically, 2 randomly selected extended facts)
    \item Maintain factual accuracy while varying expression
    \item Adhere to the assigned author profile's stylistic characteristics (each LLM randomly selects 1 style from the author pool)
    \item Include original examples, hypothetical scenarios, or contextual information consistent with the facts
\end{itemize}

This approach creates a controlled environment where we know precisely which facts appear in which documents, enabling accurate evaluation of the information sourcing behaviors of the RAG systems. More details can be found in Figure \ref{figure_prompts_article} and \ref{figure_article_example}.

\subsection{Dataset Characteristics}

\begin{table}[t]
\centering
\begin{tabular}{lr}
\hline
\textbf{Characteristic} & \textbf{Value} \\
\hline
Source documents (from \textit{repliqa}) & 3,000 \\
Core facts per document & 5-7 \\
Extended facts per document & 8-12 \\
Author styles in pool & 100 \\
LLMs used as authors & 4 \\
Extended facts selected per generation & 2 \\
Articles per source document & 4 \\
Total dataset entries & 12,000 \\
\hline
\end{tabular}
\caption{RPD Dataset Statistics}
\label{tab:dataset_stats}        
\end{table}

The resulting RPD dataset comprises 3,000 original documents from Repliqa, each with 4 derivative articles written in different styles (one per LLM author), resulting in a total of 12,000 entries as shown in Table \ref{tab:dataset_stats}. This structure creates a realistic simulation of the information ecosystem that RAG systems typically navigate, where multiple sources may contain overlapping information expressed in different ways.

\section{Methodology}

Our approach to detecting unauthorized RAG usage draws inspiration from this rich history of watermarking, adapted to the unique challenges of neural text generation. We propose a comprehensive framework consisting of two main components: (1) a dual-layered watermarking system for content protection and (2) an interrogator-detecting framework for unauthorized usage detection.

\subsection{Dual-Layered Watermarking System}

The core innovation of our methodology lies in its sequential dual-layered approach, applying two complementary watermarking techniques in succession to create a robust detection system that is resilient to various evasion attempts.

\subsubsection{Layer 1: Knowledge-Based Watermarking}

The first layer embeds distinctive knowledge information within the content, creating what we call ``knowledge watermarks''. Unlike traditional watermarks that modify the medium itself, knowledge watermarks operate at the semantic level, inserting carefully crafted information that serves as a signature.

The process begins by identifying candidate knowledge that could serve as watermarks. For a given document $D$ with a set of original facts $F_D = \{f_1, f_2, ..., f_n\}$, we construct a semantic embedding space $\mathcal{E}$ where each knowledge $f_i$ is mapped to a embedding $\mathbf{e}_i = \text{Embed}(f_i) \in \mathbb{R}^d$.

To identify potential watermark knowledge, we compute the semantic similarity between knowledge using cosine distance:
\begin{equation}
\text{sim}(f_i, f_j) = \frac{\mathbf{e}_i \cdot \mathbf{e}_j}{||\mathbf{e}_i|| \cdot ||\mathbf{e}_j||}.
\end{equation}
For each knowledge $f_i$, we identify a set of $k$ semantically similar knowledge $\mathcal{N}_k(f_i)$ that satisfy:
\begin{equation}
\mathcal{N}_k(f_i) = \{f_j \in \mathcal{F} \setminus F_D : \text{sim}(f_i, f_j) > \tau_{\text{sim}} \wedge \text{sim}(f_i, f_j) < \tau_{\text{ident}}\},
\end{equation}
where $\mathcal{F}$ represents our universal knowledge corpus, $\tau_{\text{sim}}$ is the minimum similarity threshold, and $\tau_{\text{ident}}$ is the maximum similarity threshold to avoid knowledge that are essentially identical. An example of sampled knowledge is shwon in Figure \ref{figure_sampled_knowledge_example}.

From these candidates, we select a subset of $m$ knowledge $W_D = \{w_1, w_2, ..., w_m\} \subset \bigcup_{i=1}^{n} \mathcal{N}_k(f_i)$ that maximize both semantic coherence with the document and distinctiveness as watermarks:
\begin{equation}
W_D = \mathrm{argmax}_{W \subset \bigcup_{i} \mathcal{N}_k(f_i), |W|=m} (\text{Coherence}(W, D)
+ \text{Distinctiveness}(W) ).
\end{equation}
As shown in Figure \ref{figure_prompts_facts_watermark}, the Coherence and Distinctiveness scores are evaluated by the Interrogator, which assesses how naturally the candidate watermark knowledge fits within the document context (Coherence) and how uniquely identifiable it would be as a watermark (Distinctiveness). This Agent-based evaluation allows us to select watermarks that maintain document integrity while providing strong detection signals.

These selected watermark knowledge are then strategically integrated into the document using a controlled language generation process that maintains natural flow and coherence, resulting in a knowledge-watermarked document $D_{\text{knowledge}}$.

\subsubsection{Layer 2: Red-Green Token Distribution Manipulation}
After embedding knowledge-based watermarks, we apply our second layer of protection, which works at the lexical level by subtly manipulating the statistical distribution of tokens in the generated text. This approach, inspired by the red-green watermarking technique introduced by \cite{pmlr-v202-kirchenbauer23a}, creates a statistical signature that persists even when the text is paraphrased.

For a vocabulary $\mathcal{V}$ and a context window $c_t = [t_{t-k}, ..., t_{t-1}]$ preceding position $t$, we partition the vocabulary into ``green'' tokens $\mathcal{G}(c_t)$ and ``red'' tokens $\mathcal{R}(c_t) = \mathcal{V} \setminus \mathcal{G}(c_t)$ using a seeding function $h$:

\begin{equation}
\mathcal{G}(c_t) = \{v \in \mathcal{V} : h(v, c_t) \mod \gamma < \gamma/2\},
\end{equation}
where $\gamma$ is a parameter controlling the proportion of green tokens.

We apply this second layer of watermarking to the knowledge-watermarked document $D_{\text{knowledge}}$ by regenerating it with a bias toward green tokens. During text generation, we modify the logits of the language model:

\begin{equation}
\tilde{p}(v|c_t) =
\begin{cases}
\frac{p(v|c_t) \cdot e^{\delta}}{Z(c_t)} & \text{if } v \in \mathcal{G}(c_t), \\
\frac{p(v|c_t)}{Z(c_t)} & \text{if } v \in \mathcal{R}(c_t),
\end{cases}
\end{equation}
where $p(v|c_t)$ is the original probability, $\delta$ is the bias strength, and $Z(c_t)$ is a normalization factor:

\begin{equation}
Z(c_t) = \sum_{v \in \mathcal{G}(c_t)} p(v|c_t) \cdot e^{\delta} + \sum_{v \in \mathcal{R}(c_t)} p(v|c_t).
\end{equation}
This creates a subtle statistical pattern where green tokens appear more frequently than expected by chance, yet the text remains fluent and natural to human readers. The result is a dual-watermarked document $D_{\text{dual}}$ that contains both knowledge-based and red-green watermarks.

\subsection{Interrogator-Detective Framework for Detection}

A critical challenge in detecting unauthorized RAG usage is that watermark signals become progressively weakened after retrieval and LLM reformulation. To address this challenge, we introduce an Interrogator-Detective framework that works in tandem to reveal unauthorized data usage.

\subsubsection{The Interrogator: Strategic Query Generation}

The Interrogator module generates strategic queries designed to reveal whether a RAG system has access to watermarked content:

\begin{equation}
Q_D = \text{Interrogator}(D_{\text{dual}}, W_D),
\end{equation}
where $Q_D$ is the generated query, $D_{\text{dual}}$ is the dual-watermarked document, and $W_D$ represents the knowledge-based watermarks.

The Interrogator specifically crafts questions that can only be answered correctly by accessing the knowledge-based watermarked content. These questions are designed with a critical property: they appear answerable to a RAG system that has indexed the watermarked content, but would force a system without access to respond with ``I don't know'' or provide an incorrect answer.

Mathematically, we define the ideal interrogation query $Q^*$ as:

\begin{equation}
    Q^* = \mathrm{argmax}_{Q \in \mathcal{Q}} (P(A_{\text{correct}} | \text{RAG with } D_{\text{dual}}, Q)
    - P(A_{\text{correct}} | \text{RAG without } D_{\text{dual}}, Q)),
\end{equation}
where $\mathcal{Q}$ is the space of possible queries, and $P(A_{\text{correct}} | \text{condition}, Q)$ is the probability of a correct answer given the condition and query.

The Interrogator's effectiveness stems from its ability to target the unique combination of knowledge watermarks embedded in the document, creating queries that are highly specific to watermarked content while appearing natural and relevant to the document's topic.

\subsubsection{The Detective: Statistical Evidence Analysis}

The Detective module analyzes the responses from the suspected RAG system to determine whether it has accessed watermarked content. Rather than making determinations based on individual samples, the Detective employs a statistical hypothesis testing framework:

\begin{itemize}
    \item $H_0$: The RAG system did not use the watermarked dataset.
    \item $H_1$: The RAG system plagiarized the watermarked dataset.
\end{itemize}

The Detective analyzes both layers of watermarking evidence:

\paragraph{Knowledge-Based Watermark Detection}
For a set of $n$ queries $\{Q_1, Q_2, ..., Q_n\}$ generated by the Interrogator, the Detective computes:

\begin{equation}
S_{\text{fact}} = \frac{1}{n} \sum_{i=1}^{n} \mathbb{I}(R_i \text{ contains watermarked knowledge}),
\end{equation}
where $R_i$ is the RAG system's response to query $Q_i$, and $\mathbb{I}(\cdot)$ is the indicator function.

Under the null hypothesis $H_0$ (no plagiarism), $S_{\text{fact}}$ follows a binomial distribution with parameters $n$ and $p_0$, where $p_0$ is the probability of correctly answering by chance. The Detective rejects $H_0$ if:

\begin{equation}
S_{\text{fact}} > \tau_{\text{fact}} = p_0 + z_{\alpha} \sqrt{\frac{p_0(1-p_0)}{n}},
\end{equation}
where $z_{\alpha}$ is the critical value corresponding to significance level $\alpha$.

\paragraph{Red-Green Token Distribution Detection}
For the same set of responses, the Detective analyzes the statistical distribution of tokens:

\begin{equation}
S_{\text{token}} = \frac{\sum_{i=1}^{n} \sum_{t \in R_i} \mathbb{I}(t \in \mathcal{G}(c_t))}{\sum_{i=1}^{n} |R_i|},
\end{equation}
where $|R_i|$ is the length of response $R_i$ in tokens.

Under $H_0$, $S_{\text{token}}$ should be approximately 0.5 (equal distribution of red and green tokens). The Detective rejects $H_0$ if:

\begin{equation}
Z_{\text{token}} = \frac{S_{\text{token}} - 0.5}{\sqrt{\frac{0.25}{\sum_{i=1}^{n} |R_i|}}} > z_{\alpha}.
\end{equation}

\paragraph{Combined Detection Decision}
The Detective combines evidence from both watermarking layers to make a final determination:

\begin{equation}
\text{Decision} =
\begin{cases}
\text{RAG Theft!} & \text{if } (S_{\text{fact}} > \tau_{\text{fact}}) \vee (Z_{\text{token}} > z_{\alpha}), \\
\text{Innocent} & \text{otherwise}.
\end{cases}
\end{equation}

This dual-criteria approach significantly reduces both false positives and false negatives compared to single-layer detection methods, as each layer provides complementary evidence of unauthorized usage.
An example of a dual-layer watermarked article is shown in Figure \ref{figure_watermarked_article_example}, and
the query specifically generated for dual-layer watermarked the article is shown in Figure \ref{figure_query_example}.

\section{Mathematical Analysis of the Dual-Layered Watermarking System}
In this section, we provide a rigorous mathematical analysis of our dual-layered watermarking approach for detecting unauthorized RAG usage. Our method combines knowledge-based watermarking with token distribution manipulation to create a robust detection framework.

\subsection{Analysis of Token Distribution Watermarking}

Following the red-green watermarking approach, we partition the vocabulary $\mathcal{V}$ into ``green'' tokens $\mathcal{G}(c_t)$ and ``red'' tokens $\mathcal{R}(c_t) = \mathcal{V} \setminus \mathcal{G}(c_t)$ using a seeding function $h$:
\begin{equation}
\mathcal{G}(c_t) = \{v \in \mathcal{V} : h(v, c_t) \mod \gamma < \gamma/2\},
\end{equation}
where $\gamma$ controls the proportion of green tokens and $c_t = [t_{t-k}, ..., t_{t-1}]$ is the context window preceding position $t$.
During text generation, we modify the logits of the language model to bias toward green tokens:
\begin{equation}
\tilde{p}(v|c_t) =
\begin{cases}
\frac{p(v|c_t) \cdot e^{\delta}}{Z(c_t)} & \text{if } v \in \mathcal{G}(c_t), \\
\frac{p(v|c_t)}{Z(c_t)} & \text{if } v \in \mathcal{R}(c_t),
\end{cases}
\end{equation}
where $p(v|c_t)$ is the original probability, $\delta$ is the bias strength, and $Z(c_t)$ is a normalization factor:
\begin{equation}
Z(c_t) = \sum_{v \in \mathcal{G}(c_t)} p(v|c_t) \cdot e^{\delta} + \sum_{v \in \mathcal{R}(c_t)} p(v|c_t).
\end{equation}

\subsection{Detection Sensitivity Analysis}
Let $\alpha = e^{\delta}$. For a document with token sequence $s = [s_1, s_2, ..., s_T]$, we define $|s|_G$ as the number of green tokens in the sequence. We can analyze the expected number of green tokens and its variance.
\begin{theorem}
For a watermarked text sequence of $T$ tokens generated with bias strength $\delta$ and green list proportion $\gamma$, if the average spike entropy of the sequence is at least $S^*$, then:

\begin{equation}
\mathbb{E}[|s|_G] \geq \frac{\gamma\alpha T}{1+(\alpha-1)\gamma} S^*
\end{equation}

Furthermore, the variance of the green token count is bounded by:
\begin{equation}
\text{Var}[|s|_G] \leq T \frac{\gamma\alpha S^*}{1+(\alpha-1)\gamma} \left(1-\frac{\gamma\alpha S^*}{1+(\alpha-1)\gamma}\right)
\end{equation}
where the spike entropy $S(p,z)$ of a probability distribution $p$ with modulus $z$ is defined as:

\begin{equation}
S(p,z) = \sum_k \frac{p_k}{1+zp_k}
\end{equation}
\end{theorem}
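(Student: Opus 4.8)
The plan is to reduce the entire statement to a single per-token inequality, in the spirit of the red--green analysis of \cite{pmlr-v202-kirchenbauer23a}, and then aggregate over the $T$ positions by linearity of expectation.

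First I would fix a position $t$, condition on the context $c_t$, and write $p=p(\cdot\mid c_t)$ for the model's raw next-token distribution. Treating the seeding function $h$ as inducing a green set $\mathcal{G}(c_t)$ that contains each vocabulary item with marginal probability $\gamma$, I would compute the conditional probability that the emitted token $s_t$ is green. With $P_G=\sum_{v\in\mathcal{G}(c_t)}p_v$, the logit reweighting gives $\Pr[s_t\in\mathcal{G}(c_t)\mid c_t,\mathcal{G}(c_t)]=\alpha P_G/Z$ where $Z=1+(\alpha-1)P_G$, so $\Pr[s_t\in\mathcal{G}(c_t)\mid c_t]=\alpha\sum_k \Pr[k\in\mathcal{G}(c_t)]\,\mathbb{E}\big[p_k/Z\mid k\in\mathcal{G}(c_t)\big]$.

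The core step --- and the one I expect to be the main obstacle --- is to lower bound this by $\frac{\gamma\alpha}{1+(\alpha-1)\gamma}\,S(p,z)$, where $z=\frac{(1-\gamma)(\alpha-1)}{1+(\alpha-1)\gamma}$ is the modulus induced by the reweighting. Conditioning on $k\in\mathcal{G}(c_t)$ and writing $P_G=p_k+P_{G\setminus\{k\}}$, one has $\mathbb{E}[Z\mid k\in\mathcal{G}(c_t)]=1+(\alpha-1)\big(p_k+\mathbb{E}[P_{G\setminus\{k\}}\mid k\in\mathcal{G}(c_t)]\big)\le 1+(\alpha-1)\big(p_k+\gamma(1-p_k)\big)=(1+(\alpha-1)\gamma)(1+zp_k)$, using $\alpha\ge1$ and that a fixed-proportion green list makes token memberships non-positively correlated (so the conditional mass of the other green tokens is at most $\gamma(1-p_k)$). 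Jensen's inequality $\mathbb{E}[1/Z]\ge 1/\mathbb{E}[Z]$ then yields $\mathbb{E}[p_k/Z\mid k\in\mathcal{G}(c_t)]\ge p_k/\big((1+(\alpha-1)\gamma)(1+zp_k)\big)$; multiplying by $\Pr[k\in\mathcal{G}(c_t)]=\gamma$ and summing over $k$ gives exactly $\frac{\gamma\alpha}{1+(\alpha-1)\gamma}\sum_k\frac{p_k}{1+zp_k}$. The delicate part is the conditional-expectation bookkeeping for $P_{G\setminus\{k\}}$ and keeping the $\alpha\ge1$ hypothesis (equivalently $\delta\ge0$) in view throughout.

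Given the per-token bound, the expectation claim is immediate: writing $|s|_G=\sum_{t=1}^T X_t$ with $X_t=\mathbb{I}[s_t\in\mathcal{G}(c_t)]$, the tower rule yields $\mathbb{E}[|s|_G]=\sum_t\mathbb{E}\big[\Pr(s_t\in\mathcal{G}(c_t)\mid c_t)\big]\ge\frac{\gamma\alpha}{1+(\alpha-1)\gamma}\sum_t\mathbb{E}[S(p_t,z)]\ge\frac{\gamma\alpha T}{1+(\alpha-1)\gamma}S^*$, using the hypothesis that the average spike entropy along the sequence is at least $S^*$. For the variance I would note that, conditioned on its context, $X_t$ is Bernoulli with parameter $\pi_t\ge\gamma^*_t:=\frac{\gamma\alpha S(p_t,z)}{1+(\alpha-1)\gamma}$, so $\mathrm{Var}(X_t)=\pi_t(1-\pi_t)$; combining the per-step variances (the green indicators are conditionally independent across positions) with concavity of $v\mapsto v(1-v)$ and the mean bound $\frac1T\sum_t\gamma^*_t\ge\gamma^*:=\frac{\gamma\alpha S^*}{1+(\alpha-1)\gamma}$ gives $\mathrm{Var}(|s|_G)\le T\gamma^*(1-\gamma^*)$. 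Substituting $\alpha=e^\delta$ recovers both displayed inequalities.
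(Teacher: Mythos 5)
Your proposal is correct and follows essentially the same route as the paper's proof: a per-token lower bound $P(s_t\in\mathcal{G}) \geq \frac{\gamma\alpha}{1+(\alpha-1)\gamma}S\big(p,\tfrac{(1-\gamma)(\alpha-1)}{1+(\alpha-1)\gamma}\big)$, linearity of expectation over the $T$ positions, and the conditional-independence Bernoulli variance sum combined with concavity of $v(1-v)$. The only difference is that you actually derive the key per-token inequality (via negative correlation of green-list membership and Jensen's inequality on $1/Z$), whereas the paper simply imports it from the analysis of \cite{pmlr-v202-kirchenbauer23a}; your variance step inherits the same implicit assumption as the paper's (that $\frac{\gamma\alpha S^*}{1+(\alpha-1)\gamma}$ lies in the decreasing range of $v(1-v)$), so this is not a gap relative to the paper's own argument.
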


\begin{proof}
For each token position $t$, the probability of selecting a green token is:

\begin{equation}
P(s_t \in \mathcal{G}) = \frac{\sum_{v \in \mathcal{G}(c_t)} p(v|c_t) \cdot e^{\delta}}{Z(c_t)}
\end{equation}

Following the analysis in the reference paper, we can show that:

\begin{equation}
P(s_t \in \mathcal{G}) \geq \frac{\gamma\alpha}{1+(\alpha-1)\gamma} S\left(p,\frac{(1-\gamma)(\alpha - 1)}{1+(\alpha-1)\gamma}\right)
\end{equation}

The expected number of green tokens is the sum of these probabilities across all positions:

\begin{equation}
\mathbb{E}[|s|_G] = \sum_{t=1}^T P(s_t \in \mathcal{G}) \geq \frac{\gamma\alpha T}{1+(\alpha-1)\gamma} S^*
\end{equation}

Since each token selection is independent (conditioned on the context), the variance is the sum of the variances of Bernoulli random variables \cite{eaton1970note}:

\begin{equation}
\text{Var}[|s|_G] = \sum_{t=1}^T P(s_t \in \mathcal{G})(1-P(s_t \in \mathcal{G}))
\end{equation}

By Jensen's inequality \cite{mcshane1937jensen} and the concavity of the variance function $p(1-p)$, we obtain:

\begin{equation}
\text{Var}[|s|_G] \leq T \frac{\gamma\alpha S^*}{1+(\alpha-1)\gamma} \left(1-\frac{\gamma\alpha S^*}{1+(\alpha-1)\gamma}\right)
\end{equation}
\end{proof}

\subsection{Impact on Text Quality}
The impact of our watermarking approach on text quality can be bounded as follows:
\begin{theorem}
For a token at position $t$ with original probability distribution $p^{(t)}$ and watermarked distribution $\hat{p}^{(t)}$, the expected perplexity with respect to the randomness of the partition is bounded by:

\begin{equation}
\mathbb{E}_{G,R} \sum_k \hat{p}^{(t)}_k \ln(p^{(t)}_k) \leq (1+(\alpha-1)\gamma)P^*
\end{equation}
where $P^* = \sum_k p^{(t)}_k \ln(p^{(t)}_k)$ is the perplexity of the original model.
\end{theorem}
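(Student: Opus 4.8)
The plan is to localize the claim to a single token position and then average over the randomness of the green/red split, which is the only source of randomness in $\mathbb{E}_{G,R}$. Fix position $t$ and drop the superscript, writing $p_k$, $\hat{p}_k$, $Z$ for $p^{(t)}_k$, $\hat{p}^{(t)}_k$, $Z(c_t)$, and let $\xi_k = \mathbb{I}(k \in \mathcal{G}(c_t))$ be the green-list indicator. First I would rewrite the watermarked distribution in the uniform form $\hat{p}_k = p_k\,(1+(\alpha-1)\xi_k)/Z$, where $Z = \sum_j p_j(1+(\alpha-1)\xi_j) = 1 + (\alpha-1)\sum_{j\in\mathcal{G}}p_j$ is exactly the normalizer in the definition. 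Since the bias only boosts green tokens, $\alpha = e^{\delta}\ge 1$, so every factor $1+(\alpha-1)\xi_k$ lies in $[1,\alpha]$ and, crucially, $Z\ge 1$.

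The second step exploits the sign of $\ln p_k$. Because $p_k\le 1$ we have $\ln p_k\le 0$, and multiplying the bound $1/Z\le 1$ by the non-positive quantity $p_k(1+(\alpha-1)\xi_k)\ln p_k$ yields the pointwise estimate
\begin{equation}
\hat{p}_k\ln p_k \;=\; \frac{p_k(1+(\alpha-1)\xi_k)}{Z}\,\ln p_k \;\ge\; p_k(1+(\alpha-1)\xi_k)\ln p_k .
\end{equation}
Summing over $k$ gives $\sum_k \hat{p}_k\ln p_k \ge P^* + (\alpha-1)\sum_{k\in\mathcal{G}}p_k\ln p_k$. Now I would take $\mathbb{E}_{G,R}$: the seeding function puts each token in the green list with marginal probability $\gamma$, so by linearity (only a sum over $k$ appears, so no joint independence of the $\xi_k$ is required) $\mathbb{E}_{G,R}\!\big[\sum_{k\in\mathcal{G}}p_k\ln p_k\big] = \gamma\sum_k p_k\ln p_k = \gamma P^*$. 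Substituting,
\begin{equation}
\mathbb{E}_{G,R}\sum_k \hat{p}_k\ln p_k \;\ge\; \bigl(1+(\alpha-1)\gamma\bigr)P^* ,
\end{equation}
which is the stated bound, read with $P^*=\sum_k p_k\ln p_k\le 0$ as the (log-)perplexity of the original model; equivalently, the expected cross-entropy $-\sum_k\hat{p}_k\ln p_k$ exceeds the original entropy $-P^*$ by at most the multiplicative factor $1+(\alpha-1)\gamma$.

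The hard part is the normalizer $Z$: it depends on the whole partition through $\sum_{j\in\mathcal{G}}p_j$, so a naive attempt to compute $\mathbb{E}_{G,R}[\sum_k\hat{p}_k\ln p_k]$ exactly would entangle all coordinates of the split. The proof sidesteps this entirely by replacing $Z$ with the one-sided, partition-free bound $Z\ge 1$ before averaging — legitimate only because $\delta\ge 0$ — after which the expression is affine in the $\xi_k$ and the single fact $\mathbb{E}[\xi_k]=\gamma$ finishes it. The remaining thing to watch is that the direction of every inequality is governed by $\ln p_k\le 0$; if one prefers to state the result for the non-negative perplexity, the identical steps give $\mathbb{E}_{G,R}[-\sum_k\hat{p}_k\ln p_k]\le(1+(\alpha-1)\gamma)(-P^*)$.
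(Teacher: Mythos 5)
Your proposal follows essentially the same route as the paper's proof: drop the partition-dependent normalizer via $Z(c_t)\ge 1$ (valid because $\delta\ge 0$), use the fact that each token is green with marginal probability $\gamma$, and finish by linearity of expectation; your unified form $\hat p_k = p_k\,(1+(\alpha-1)\xi_k)/Z$ is just a compact rewriting of the paper's case split on $k\in G$ versus $k\in R$. The one substantive difference is that you track the sign of $\ln p_k$. Since $\ln p_k\le 0$, discarding the factor $1/Z\le 1$ reverses the inequality, so the argument actually yields $\mathbb{E}_{G,R}\sum_k \hat p^{(t)}_k \ln p^{(t)}_k \ \ge\ (1+(\alpha-1)\gamma)P^*$, equivalently the cross-entropy bound $\mathbb{E}_{G,R}\bigl[-\sum_k \hat p^{(t)}_k \ln p^{(t)}_k\bigr]\le (1+(\alpha-1)\gamma)\bigl(-P^*\bigr)$, which is the meaningful ``perplexity'' statement. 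The paper's own proof multiplies $\mathbb{E}_{G,R}\,\hat p_k \le (1+(\alpha-1)\gamma)p_k$ by the negative quantity $\ln p_k$ without flipping the direction, so the theorem as literally written holds only if $\ln(p_k)$ is read as $\ln(1/p_k)$ (the convention in the original red-green analysis); your closing remark makes exactly this correction, and your derivation is sound.
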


\begin{proof}
The probability of sampling token $k$ from the modified distribution is:

\begin{equation}
\hat{p}_k = \mathbb{E}_{G,R} \frac{\alpha^{\mathbb{I}(k \in G)} p_k}{\sum_{i \in R} p_i + \alpha \sum_{i \in G} p_i}
\end{equation}

We can decompose this expectation based on whether $k$ is in $G$ or $R$:

\begin{align}
\begin{split}
    \hat{p}_k &= \gamma \cdot \mathbb{E}_{G,R|k \in G} \frac{\alpha p_k}{\sum_{i \in R} p_i + \alpha \sum_{i \in G} p_i} \\ 
    &+ (1-\gamma) \cdot \mathbb{E}_{G,R|k \in R} \frac{p_k}{\sum_{i \in R} p_i + \alpha \sum_{i \in G} p_i} \\
&\leq \gamma \alpha p_k + (1-\gamma)p_k = (1+(\alpha-1)\gamma)p_k
\end{split}
\end{align}

The expected perplexity is then:

\begin{align}
\mathbb{E}_{G,R} \sum_k \hat{p}^{(t)}_k \ln(p^{(t)}_k) &= \sum_k \mathbb{E}_{G,R} \hat{p}^{(t)}_k \ln(p^{(t)}_k) \\
&\leq \sum_k (1+(\alpha-1)\gamma)p^{(t)}_k \ln(p^{(t)}_k) \\
&= (1+(\alpha-1)\gamma)P^*
\end{align}
\end{proof}

\subsection{Resilience Against Adversarial Attacks}

Let's analyze the resilience of our dual-layered approach against adversarial modifications. Suppose an adversary modifies $m$ tokens in a sequence of length $T$.

For the token distribution watermark, each modified token can affect at most $k+1$ tokens (itself and $k$ subsequent tokens whose red-green lists depend on it). The maximum reduction in the z-statistic is:

\begin{equation}
\Delta Z_{\text{token}} \leq \frac{(k+1)m}{\sqrt{T}}
\end{equation}

For the knowledge-based watermark, the adversary would need to identify and modify the specific watermarked knowledge, which requires knowledge of which facts serve as watermarks. If the adversary randomly modifies tokens, the probability of removing all watermarked knowledge is:

\begin{equation}
P(\text{remove all watermarks}) \leq \left(\frac{m}{T}\right)^{|W_D|}
\end{equation}
where $|W_D|$ is the number of knowledge-based watermarks.
This demonstrates that our dual-layered approach requires an adversary to make substantial modifications to the text to evade detection, which would significantly degrade the quality and utility of the generated content.

\section{Experiments}

\subsection{Experimental Setup}
To evaluate the effectiveness of our dual-layered watermarking approach for detecting unauthorized RAG usage, we conducted comprehensive experiments using our RPD dataset. This section details the experimental configuration and implementation specifics.
\subsubsection{RAG System Configuration}
We implemented a standard RAG pipeline consisting of an embedding-based retrieval system followed by a language model for content generation. The key components of our experimental setup include:

\begin{itemize}
    \item \textbf{Retrieval System:} We utilized OpenAI's \texttt{text-embedding-3-small}\footnote{https://platform.openai.com/docs/models/text-embedding-3-small} model to generate embedding representations of documents in our corpus. For efficient similarity search, we implemented an approximate nearest neighbor search using the \texttt{hnswlib} library \cite{malkov2018efficient}, which provides a balance between search speed and accuracy.
    
    \item \textbf{Retrieval Parameters:} For each query, we set the number of retrieved documents to $K=3$, which represents a typical configuration in production RAG systems that balances context richness with computational efficiency.
    
    \item \textbf{Large Language Models:} To ensure robustness across different generation capabilities, we employed multiple state-of-the-art language models for the generation component of our RAG system, including variants from the OpenAI, Gemini, and Claude families.
\end{itemize}

The rag prompt is shown in Figure \ref{figure_prompts_rag_detect}.

\subsubsection{Watermarking Implementation}
We implemented both layers of our watermarking approach with the following parameters:

\begin{itemize}
    \item \textbf{Knowledge-based Watermarking:} For embedding semantic-level watermarks, we utilized OpenAI's \texttt{text} \textit{-embedding-3-small} model to compute semantic similarities between Knowledge. This allowed us to identify candidate watermark Knowledge that maintain semantic coherence while being distinctive enough to serve as effective watermarks. The prompt of knowledge-based watermarking is shown in Figure \ref{figure_prompts_facts_watermark}.
    
    \item \textbf{Red-Green Token Distribution:} For the lexical-level watermarking, we set $\gamma=0.25$, meaning approximately 25\% of the vocabulary was designated as ``green'' tokens for any given context. We applied a bias strength of $\delta=4.0$ to increase the probability of selecting green tokens during text generation, creating a detectable statistical pattern while maintaining text fluency. The prompt of red-green watermarking is shown in the top of Figure \ref{figure_prompts_green_red}.
\end{itemize}

\subsubsection{Baselines}
To comprehensively evaluate our dual-layered watermarking approach, we compared it against several state-of-the-art methods for detecting unauthorized data usage in generative AI systems. These baselines represent different approaches to the problem, ranging from membership inference attacks to specialized watermarking techniques:
1.\textbf{AAG (Attacking Augmented Generation)} \cite{anderson2024my}: This method focuses on membership inference attacks against RAG systems. AAG creates carefully crafted prompts designed to determine whether specific text passages exist in a retrieval database by analyzing the RAG system's outputs. Originally developed to highlight privacy vulnerabilities in RAG systems, AAG can be repurposed to detect unauthorized data usage by treating it as a membership inference problem.
2.\textbf{Facts}: A strong baseline proposed by \cite{jovanovic2025ward}, which prompts an auxiliary LLM to generate a single question that is only answerable by reading a document. Then, this question will be sent to the suspicious RAG system. If the RAG system successfully answers the question, it indicates that data theft has occurred.
3.\textbf{WARD (Watermarking for RAG Dataset Inference)} \cite{jovanovic2025ward}: This method represents the current state-of-the-art approach specifically designed for RAG Dataset Inference (RAG-DI). WARD employs LLM watermarks to provide data owners with statistical guarantees regarding the misuse of their content in RAG corpora.

\subsubsection{Evaluation Metrics and Dataset Configuration}
To comprehensively evaluate our detection method, we established two distinct dataset configurations:

\begin{itemize}
    \item \textbf{D-IN (Dataset-Included):} RAG systems containing unauthorized watermarked documents in their retrieval corpus. These systems represent the positive cases where data theft has occurred.
    
    \item \textbf{D-OUT (Dataset-Excluded):} RAG systems without any watermarked documents in their retrieval corpus. These systems represent the negative cases where no data theft has occurred.
\end{itemize}

For each configuration, we report the following evaluation metrics:

\begin{itemize}
    \item \textbf{True Positives (TP):} Number of D-IN systems correctly identified as containing unauthorized data.
    \item \textbf{False Negatives (FN):} Number of D-IN systems incorrectly classified as not containing unauthorized data.
    \item \textbf{True Negatives (TN):} Number of D-OUT systems correctly identified as not containing unauthorized data.
    \item \textbf{False Positives (FP):} Number of D-OUT systems incorrectly classified as containing unauthorized data.
    \item \textbf{Accuracy (Acc.):} The proportion of correct classifications across both D-IN and D-OUT systems, calculated as $\text{Acc.} = \frac{\text{TP} + \text{TN}}{\text{TP} + \text{TN} + \text{FP} + \text{FN}}$.
\end{itemize}

We conducted experiments under two conditions: (1) without factual redundancy, where each fact appears in only one document, and (2) with factual redundancy, where similar facts appear across multiple documents with different expressions, simulating real-world information ecosystems.

\begin{figure*}[t]
	    \centering
	    \includegraphics[scale=0.41]{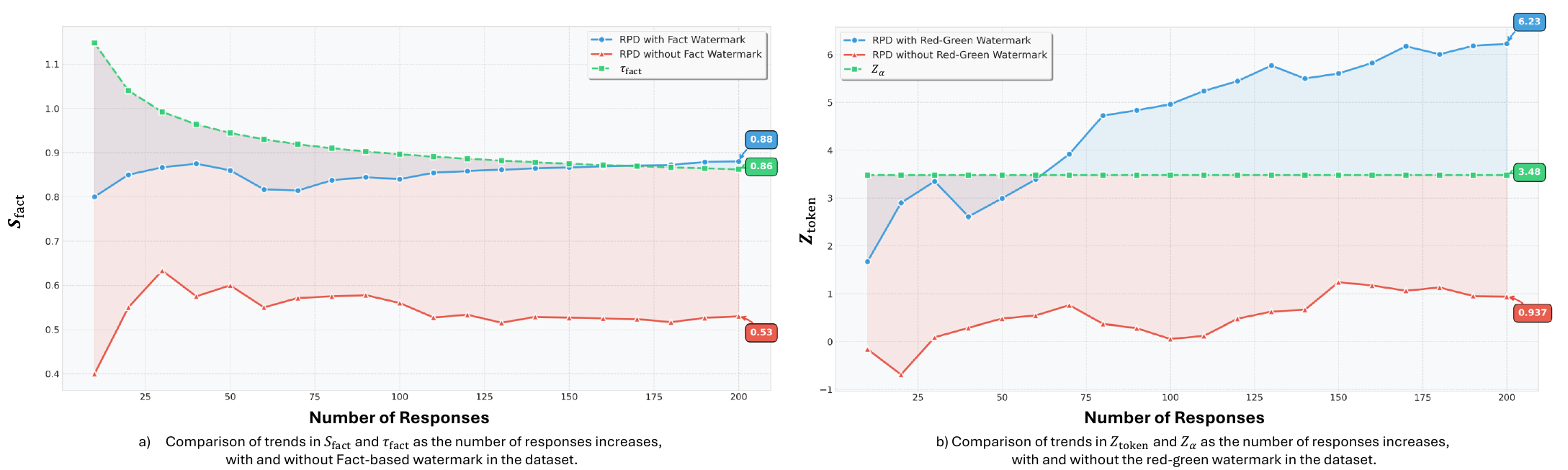}
	    \caption{Experimental results on the robustness detection of red-green watermark and Fact-based watermark. Using a z-test $(\alpha = 0.005)$.}
	    \label{figure_statical_experiments}
\end{figure*}

\subsection{Robustness of RAG Plagiarism Detection}

To validate the effectiveness of our dual-layered watermarking approach, we designed a comprehensive experimental framework that simulates real-world RAG scenarios. Our experiments specifically address a critical challenge in RAG plagiarism detection: while the watermark signal in any single piece of content may be too weak to reliably detect misuse, accumulated evidence across multiple queries can provide statistically significant detection power.
We split \textbf{RPD} into two distinct datasets to simulate authentic RAG environments:
\begin{itemize}
    \item \textbf{Unauthorized Dataset:} Content embedded with our dual-layered watermarks (both fact-based and red-green token distribution watermarks).
    \item \textbf{Authorized Dataset:} Similar content without watermarks.
\end{itemize}
These datasets were thoroughly intermixed to simulate the massive, heterogeneous knowledge bases typically used in production RAG systems. This mixed dataset was then indexed and made available to our experimental RAG system, allowing us to control the proportion of watermarked content while maintaining realistic retrieval dynamics.

Our evaluation focused on two key metrics that correspond to our dual-layered approach:
\begin{enumerate}
    \item \textbf{Fact-based Watermark Detection:} Measured by $S_{\text{fact}}$, the proportion of responses that are able to answer the query.
    \item \textbf{Red-Green Token Distribution Detection:} Measured by $Z_{\text{token}}$, the statistic of the standardized test for the bias of the token distribution.
\end{enumerate}
For each experiment, we deployed the Interrogator module to generate queries specifically designed to target watermarked content. These queries were constructed to be answerable only by accessing the watermarked information, forcing the RAG system to either retrieve the watermarked content (if available) or respond with ``Unanswerable'' when the information could not be found.

\subsubsection{Fact-based Watermark Detection Results}
We first evaluated the robustness of our fact-based watermarking approach by analyzing how detection performance scales with query volume. Figure \ref{figure_statical_experiments}a illustrates the relationship between the number of queries and the detection statistic $S_{\text{fact}}$.

When the RAG system had access to the mixed watermarked dataset, we observed that $S_{\text{fact}}$ increased steadily with the number of responses and eventually stabilized around $0.88$, significantly exceeding the detection threshold $\tau_{\text{fact}} = 0.86$ (calculated using $\alpha = 0.005$). This indicates strong statistical evidence of the use of unauthorized content.

In contrast, when the RAG system used only the non-watermarked dataset, $S_{\text{fact}}$ showed no increasing trend with query volume and stabilized around $0.53$, remaining consistently below the detection threshold $\tau_{\text{fact}} = 0.88$. This confirms that our detection method maintains a low false positive rate even with large query volumes.

Statistical significance was assessed using a z-test \cite{zaykin2011optimally} with $\alpha = 0.005$, demonstrating that the observed differences were highly unlikely to occur by chance. These results validate that our fact-based watermarking approach provides robust detection capabilities when sufficient queries are processed, despite the potential weakness of individual watermark signals.

\subsection{Red-Green Token Distribution Detection Results}
We next evaluated the robustness of our red-green token distribution watermarking approach. Figure \ref{figure_statical_experiments}b shows how the test statistic $Z_{\text{token}}$ varies with increasing query volume.

When the RAG system used the mixed watermark dataset, $Z_{\text{token}}$ exhibited a clear increasing trend with the number of responses, eventually stabilizing around $6.23$, significantly exceeding the critical value $z_{\alpha} = 3.48$ (corresponding to $\alpha = 0.005$). This strong statistical signal confirms that the token distribution bias introduced by our watermarking technique persists even after LLM reformulation in the RAG process.

In contrast, when the RAG system accessed only the non-watermarked dataset, $Z_{\text{token}}$ did not show a consistent increasing trend, fluctuating around $0.78$ and remaining well below the critical value $z_{\alpha} = 3.48$. This demonstrates that our token distribution watermarking maintains a low false-positive rate, correctly identifying systems that do not use watermarked content.

\begin{table*}[ht]
\centering
\resizebox{\textwidth}{!}{%
\begin{tabular}{llcccccccccc}
\hline
\multirow{3}{*}{\textbf{Method}} & \multirow{3}{*}{\textbf{LLM}} & \multicolumn{5}{c}{\textbf{Without Factual Redundancy}} & \multicolumn{5}{c}{\textbf{With Factual Redundancy}} \\
\cmidrule(lr){3-7} \cmidrule(lr){8-12}
& & \multicolumn{2}{c}{\textbf{D-IN}} & \multicolumn{2}{c}{\textbf{D-OUT}} & \multirow{2}{*}{\textbf{Acc.}\small{$\uparrow$}} & \multicolumn{2}{c}{\textbf{D-IN}} & \multicolumn{2}{c}{\textbf{D-OUT}} & \multirow{2}{*}{\textbf{Acc.}\small{$\uparrow$}} \\
\cmidrule(lr){3-4} \cmidrule(lr){5-6} \cmidrule(lr){8-9} \cmidrule(lr){10-11}
& & \textbf{TP}\small{$\uparrow$} & \textbf{FN}\small{$\downarrow$} & \textbf{TN}\small{$\uparrow$} & \textbf{FP}\small{$\downarrow$} & & \textbf{TP}\small{$\uparrow$} & \textbf{FN}\small{$\downarrow$} & \textbf{TN}\small{$\uparrow$} & \textbf{FP}\small{$\downarrow$} & \\
\hline
\multirow{4}{*}{AAG \cite{anderson2024my}} 
& GPT-4.1 & 10/10 & 0/10 & 10/10 & 0/10 & 1.0 & 10/10 & 0/10 & \cellcolor{cRed}0/10 & \cellcolor{cRed}10/10 & \cellcolor{cRed}0.5 \\
& OpenAI o3 & 10/10 & 0/10 & 10/10 & 0/10 & 1.0 & 10/10 & 0/10 & \cellcolor{cRed}0/10 & \cellcolor{cRed}10/10 & \cellcolor{cRed}0.5 \\
& Claude 3.5 & 10/10 & 0/10 & 10/10 & 0/10 & 1.0 & 10/10 & 0/10 & \cellcolor{cRed}0/10 & \cellcolor{cRed}10/10 & \cellcolor{cRed}0.5 \\
& Gemini-2.0 & 10/10 & 0/10 & 10/10 & 0/10 & 1.0 & 10/10 & 0/10 & \cellcolor{cRed}0/10 & \cellcolor{cRed}10/10 & \cellcolor{cRed}0.5 \\
\hline
\multirow{4}{*}{Facts \cite{jovanovic2025ward}} 
& GPT-4.1 & 10/10 & 0/10 & 10/10 & 0/10 & 1.0 & 10/10 & 0/10 & \cellcolor{cRed}0/10 & \cellcolor{cRed}10/10 & \cellcolor{cRed}0.5 \\
& OpenAI o3 & 10/10 & 0/10 & 10/10 & 0/10 & 1.0 & 10/10 & 0/10 & \cellcolor{cRed}0/10 & \cellcolor{cRed}10/10 & \cellcolor{cRed}0.5 \\
& Claude 3.5 & 10/10 & 0/10 & 10/10 & 0/10 & 1.0 & 10/10 & 0/10 & \cellcolor{cRed}0/10 & \cellcolor{cRed}10/10 & \cellcolor{cRed}0.5 \\
& Gemini-2.0 & 10/10 & 0/10 & 10/10 & 0/10 & 1.0 & 10/10 & 0/10 & \cellcolor{cRed}0/10 & \cellcolor{cRed}10/10 & \cellcolor{cRed}0.5 \\
\hline
\multirow{4}{*}{WARD \cite{jovanovic2025ward}} 
& GPT-4.1 & \cellcolor{cGrey}\textbf{10/10} & \cellcolor{cGrey}\textbf{0/10} & \cellcolor{cGrey}\textbf{10/10} & \cellcolor{cGrey}\textbf{0/10} & \cellcolor{cGrey}\textbf{1.0} & \cellcolor{cGrey}\textbf{10/10} & \cellcolor{cGrey}\textbf{0/10} & \cellcolor{cGrey}\textbf{10/10} & \cellcolor{cGrey}\textbf{0/10} & \cellcolor{cGrey}\textbf{1.0} \\
& OpenAI o3 & \cellcolor{cGrey}\textbf{10/10} & \cellcolor{cGrey}\textbf{0/10} & \cellcolor{cGrey}\textbf{10/10} & \cellcolor{cGrey}\textbf{0/10} & \cellcolor{cGrey}\textbf{1.0} & \cellcolor{cGrey}\textbf{10/10} & \cellcolor{cGrey}\textbf{0/10} & \cellcolor{cGrey}\textbf{10/10} & \cellcolor{cGrey}\textbf{0/10} & \cellcolor{cGrey}\textbf{1.0} \\
& Claude 3.5 & \cellcolor{cGrey}\textbf{10/10} & \cellcolor{cGrey}\textbf{0/10} & \cellcolor{cGrey}\textbf{10/10} & \cellcolor{cGrey}\textbf{0/10} & \cellcolor{cGrey}\textbf{1.0} & \cellcolor{cGrey}\textbf{10/10} & \cellcolor{cGrey}\textbf{0/10} & \cellcolor{cGrey}\textbf{10/10} & \cellcolor{cGrey}\textbf{0/10} & \cellcolor{cGrey}\textbf{1.0} \\
& Gemini-2.0 & \cellcolor{cGrey}\textbf{10/10} & \cellcolor{cGrey}\textbf{0/10} & \cellcolor{cGrey}\textbf{10/10} & \cellcolor{cGrey}\textbf{0/10} & \cellcolor{cGrey}\textbf{1.0} & \cellcolor{cGrey}\textbf{10/10} & \cellcolor{cGrey}\textbf{0/10} & \cellcolor{cGrey}\textbf{10/10} & \cellcolor{cGrey}\textbf{0/10} & \cellcolor{cGrey}\textbf{1.0} \\
\hline
\multirow{4}{*}{\textbf{Dual-Layered (Ours)}} 
& GPT-4.1 & \cellcolor{cGrey}\textbf{10/10} & \cellcolor{cGrey}\textbf{0/10} & \cellcolor{cGrey}\textbf{10/10} & \cellcolor{cGrey}\textbf{0/10} & \cellcolor{cGrey}\textbf{1.0} & \cellcolor{cGrey}\textbf{10/10} & \cellcolor{cGrey}\textbf{0/10} & \cellcolor{cGrey}\textbf{10/10} & \cellcolor{cGrey}\textbf{0/10} & \cellcolor{cGrey}\textbf{1.0} \\
& OpenAI o3 & \cellcolor{cGrey}\textbf{10/10} & \cellcolor{cGrey}\textbf{0/10} & \cellcolor{cGrey}\textbf{10/10} & \cellcolor{cGrey}\textbf{0/10} & \cellcolor{cGrey}\textbf{1.0} & \cellcolor{cGrey}\textbf{10/10} & \cellcolor{cGrey}\textbf{0/10} & \cellcolor{cGrey}\textbf{10/10} & \cellcolor{cGrey}\textbf{0/10} & \cellcolor{cGrey}\textbf{1.0} \\
& Claude 3.5 & \cellcolor{cGrey}\textbf{10/10} & \cellcolor{cGrey}\textbf{0/10} & \cellcolor{cGrey}\textbf{10/10} & \cellcolor{cGrey}\textbf{0/10} & \cellcolor{cGrey}\textbf{1.0} & \cellcolor{cGrey}\textbf{10/10} & \cellcolor{cGrey}\textbf{0/10} & \cellcolor{cGrey}\textbf{10/10} & \cellcolor{cGrey}\textbf{0/10} & \cellcolor{cGrey}\textbf{1.0} \\
& Gemini-2.0 & \cellcolor{cGrey}\textbf{10/10} & \cellcolor{cGrey}\textbf{0/10} & \cellcolor{cGrey}\textbf{10/10} & \cellcolor{cGrey}\textbf{0/10} & \cellcolor{cGrey}\textbf{1.0} & \cellcolor{cGrey}\textbf{10/10} & \cellcolor{cGrey}\textbf{0/10} & \cellcolor{cGrey}\textbf{10/10} & \cellcolor{cGrey}\textbf{0/10} & \cellcolor{cGrey}\textbf{1.0} \\
\hline
\end{tabular}%
}
\caption{Comprehensive detection performance across LLMs and dataset conditions}
\label{tab:comprehensive_results}
\end{table*}

\subsection{Comparison Results}

To evaluate the effectiveness of our dual-layered watermarking approach against existing methods, we conducted comprehensive experiments using the RPD dataset. We constructed an indexed dataset from RPD, watermarked $200$ documents to simulate unauthorized data, and deployed the Interrogator to generate targeted queries based on the watermarked content. These queries were then sent to RAG systems built with four widely-used LLMs: GPT-4.1, OpenAI o3, Claude 3.5 Sonnet, and Gemini-2.0. Each experiment was repeated $10$ times with different random seeds to ensure statistical robustness.
We list all prompts, and examples of RPD samples and watermarked documents, in Appendix \ref{app:prompts}.

\subsubsection{Detection Performance without Factual Redundancy}

As shown in Table \ref{tab:comprehensive_results}, all methods performed exceptionally well on datasets without factual redundancy, achieving perfect detection accuracy of $100\%$ across all LLMs. This confirms our hypothesis that in the absence of factual redundancy, detection is relatively straightforward since the queries constructed by the Interrogator can precisely match the watermarked documents without interference from similar content. In this scenario, both baseline methods and our approach successfully identified all instances of unauthorized data usage while correctly classifying systems without watermarked content.

\subsubsection{Detection Performance with Factual Redundancy}

The more challenging and realistic scenario involves datasets with factual redundancy, where multiple documents contain similar or identical core facts expressed in different writing styles. This setting more accurately reflects real-world information ecosystems and poses a significant challenge for detection methods. The results in Table \ref{tab:comprehensive_results} reveal substantial differences in performance among the methods under these conditions:

\begin{itemize}
    \item \textbf{AAG \cite{anderson2024my}:} Performance degraded substantially, with accuracy dropping to $50\%$ across all LLMs. Further analysis revealed that AAG consistently misclassified RAG systems that did not contain watermarked data (D-OUT) as containing unauthorized content, resulting in a $100\%$ false positive rate. This occurs because AAG's core detection mechanism relies on directly querying the LLM about document inclusion in the RAG corpus. When factually similar documents exist, the LLM can be misled into incorrectly confirming the presence of specific content.
    
    \item \textbf{Facts \cite{jovanovic2025ward}:} Similarly, this method achieved only $50\%$ accuracy with factual redundancy. While it correctly identified all D-IN systems (perfect true positive rate), it failed to correctly classify any D-OUT systems, resulting in a $100\%$ false positive rate. This is because when factual redundancy exists in the RAG dataset, even without watermarked documents, the LLM can obtain answers from other documents that contain similar information, leading to incorrect classification.
    
    \item \textbf{WARD \cite{jovanovic2025ward}:} This method maintained strong performance even with factual redundancy, achieving $100\%$ accuracy across all LLMs. Its effectiveness stems from the statistical token distribution patterns that remain detectable even when similar content is retrieved, providing resilience against factual redundancy challenges.
    
    \item \textbf{Dual-Layered (Ours):} Our approach also achieved perfect $100\%$ accuracy across all LLMs in the presence of factual redundancy. This robust performance can be attributed to the synergistic effect of combining knowledge-based watermarks with token distribution manipulation, creating distinctive signatures that stand out even among similar information.
\end{itemize}

These results demonstrate that while simple fact-based or membership inference approaches may work well in controlled environments without factual redundancy, they fail to maintain reliability in more realistic scenarios. In contrast, both WARD and our dual-layered approach provide robust detection capabilities that remain effective even in the presence of factual redundancy, with our method offering the additional benefit of a multi-layered protection mechanism that can withstand various evasion attempts.

\begin{table*}[t]
\centering
\resizebox{\textwidth}{!}{%
\begin{tabular}{llcccccccccc}
\hline
\multirow{3}{*}{\textbf{Method}} & \multirow{3}{*}{\textbf{LLM}} & \multicolumn{5}{c}{\textbf{Knowledge Watermark Evasion}} & \multicolumn{5}{c}{\textbf{Token Distribution Evasion}} \\
\cmidrule(lr){3-7} \cmidrule(lr){8-12}
& & \multicolumn{2}{c}{\textbf{D-IN}} & \multicolumn{2}{c}{\textbf{D-OUT}} & \multirow{2}{*}{\textbf{Acc.}\small{$\uparrow$}} & \multicolumn{2}{c}{\textbf{D-IN}} & \multicolumn{2}{c}{\textbf{D-OUT}} & \multirow{2}{*}{\textbf{Acc.}\small{$\uparrow$}} \\
\cmidrule(lr){3-4} \cmidrule(lr){5-6} \cmidrule(lr){8-9} \cmidrule(lr){10-11}
& & \textbf{TP}\small{$\uparrow$} & \textbf{FN}\small{$\downarrow$} & \textbf{TN}\small{$\uparrow$} & \textbf{FP}\small{$\downarrow$} & & \textbf{TP}\small{$\uparrow$} & \textbf{FN}\small{$\downarrow$} & \textbf{TN}\small{$\uparrow$} & \textbf{FP}\small{$\downarrow$} & \\
\hline
\multirow{4}{*}{WARD \cite{jovanovic2025ward}}
& GPT-4.1 & 10/10 & 0/10 & 10/10 & 0/10 & 1.0 & \cellcolor{cRed}0/10 & \cellcolor{cRed}10/10 & 10/10 & 0/10 & \cellcolor{cRed}0.5 \\
& OpenAI o3 & 10/10 & 0/10 & 10/10 & 0/10 & 1.0 & \cellcolor{cRed}0/10 & \cellcolor{cRed}10/10 & 10/10 & 0/10 & \cellcolor{cRed}0.5 \\
& Claude 3.5 & 10/10 & 0/10 & 10/10 & 0/10 & 1.0 & 4/10 & 6/10 & 10/10 & 0/10 & 0.7 \\
& Gemini-2.0 & 10/10 & 0/10 & 10/10 & 0/10 & 1.0 & \cellcolor{cRed}0/10 & \cellcolor{cRed}10/10 & 10/10 & 0/10 & \cellcolor{cRed}0.5 \\
\hline
\multirow{4}{*}{\textbf{Dual-Layered (Ours)}}
& GPT-4.1 & \cellcolor{cGrey}\textbf{10/10} & \cellcolor{cGrey}\textbf{0/10} & \cellcolor{cGrey}\textbf{10/10} & \cellcolor{cGrey}\textbf{0/10} & \cellcolor{cGrey}\textbf{1.0} & \cellcolor{cGrey}\textbf{10/10} & \cellcolor{cGrey}\textbf{0/10} & \cellcolor{cGrey}\textbf{10/10} & \cellcolor{cGrey}\textbf{0/10} & \cellcolor{cGrey}\textbf{1.0}  \\
& OpenAI o3 & \cellcolor{cGrey}\textbf{10/10} & \cellcolor{cGrey}\textbf{0/10} & \cellcolor{cGrey}\textbf{10/10} & \cellcolor{cGrey}\textbf{0/10} & \cellcolor{cGrey}\textbf{1.0} & \cellcolor{cGrey}\textbf{10/10} & \cellcolor{cGrey}\textbf{0/10} & \cellcolor{cGrey}\textbf{10/10} & \cellcolor{cGrey}\textbf{0/10} & \cellcolor{cGrey}\textbf{1.0}  \\
& Claude 3.5 & \cellcolor{cGrey}\textbf{10/10} & \cellcolor{cGrey}\textbf{0/10} & \cellcolor{cGrey}\textbf{10/10} & \cellcolor{cGrey}\textbf{0/10} & \cellcolor{cGrey}\textbf{1.0} & \cellcolor{cGrey}\textbf{10/10} & \cellcolor{cGrey}\textbf{0/10} & \cellcolor{cGrey}\textbf{10/10} & \cellcolor{cGrey}\textbf{0/10} & \cellcolor{cGrey}\textbf{1.0}  \\
& Gemini-2.0 & \cellcolor{cGrey}\textbf{10/10} & \cellcolor{cGrey}\textbf{0/10} & \cellcolor{cGrey}\textbf{10/10} & \cellcolor{cGrey}\textbf{0/10} & \cellcolor{cGrey}\textbf{1.0} & \cellcolor{cGrey}\textbf{10/10} & \cellcolor{cGrey}\textbf{0/10} & \cellcolor{cGrey}\textbf{10/10} & \cellcolor{cGrey}\textbf{0/10} & \cellcolor{cGrey}\textbf{1.0}  \\
\hline
\end{tabular}%
}
\caption{Comprehensive detection performance under adversarial conditions}
\label{tab:adversarial_results}
\end{table*}

\subsection{Adversarial Experiments}
To evaluate the robustness of our approach against sophisticated evasion attempts, we conducted adversarial experiments simulating realistic scenarios where RAG data thieves actively attempt to conceal their unauthorized data usage. Since our previous experiments demonstrated that AAG and Facts methods fail to handle factual redundancy (achieving only 50\% accuracy), we focused our adversarial analysis on comparing our Dual-Layered approach against WARD \cite{jovanovic2025ward}, the current state-of-the-art method.
We implemented two common adversarial tactics employed by RAG data thieves:
\begin{itemize}
    \item \textbf{Knowledge Watermark Evasion:} This technique targets knowledge-based watermarks by using LLMs to detect incoherently spliced facts within documents. When potentially watermarked documents are identified, the RAG system is instructed to avoid referencing these suspicious documents, thereby attempting to evade detection. The prompt is shown in Figure \ref{figure_prompts_facts_adversarial}.
    
    \item \textbf{Token Distribution Evasion:} This approach targets token-distribution watermarks by employing LLMs to completely rewrite referenced articles, minimizing the use of original words and phrases to disrupt statistical patterns embedded in the text. The prompt is shown in Figure \ref{figure_prompts_adversarial}.
\end{itemize}
These evasion techniques represent realistic adversarial scenarios that any practical watermarking system must withstand to provide reliable protection against unauthorized data usage.
Table \ref{tab:adversarial_results} presents the performance of both methods under adversarial conditions. The results reveal significant differences in robustness:
\begin{itemize}
    \item \textbf{Knowledge Watermark Evasion:} Both WARD and our Dual-Layered approach maintained perfect detection accuracy (100\%) across all LLMs when facing knowledge watermark evasion attempts. This suggests that token distribution-based watermarking techniques remain effective even when adversaries attempt to identify and avoid watermarked documents.
    
    \item \textbf{Token Distribution Evasion:} WARD's performance degraded catastrophically when facing token distribution evasion, with accuracy dropping to 50\% for GPT-4.1, OpenAI o3, and Gemini 2.0, and 70\% for Claude 3.5. The true positive rate (TP) dropped to 0/10 for three of the four LLMs, indicating complete failure to detect unauthorized data usage when adversaries employ rewriting techniques. In contrast, our Dual-Layered approach maintained perfect 100\% accuracy across all LLMs, demonstrating exceptional resilience against token distribution evasion attempts.
\end{itemize}
The stark difference in performance under token distribution evasion can be attributed to the fundamental design principles of each approach:
\begin{itemize}
    \item WARD relies primarily on statistical patterns in token distributions, which can be effectively disrupted through comprehensive rewriting. When adversaries employ LLMs to reformulate content while preserving core information, the statistical watermarks become undetectable.
    
    \item Our Dual-Layered approach achieves superior robustness through its sequential integration of complementary watermarking techniques. By combining knowledge-based watermarks with token distribution manipulation, our method creates redundant protection mechanisms that remain effective even when one layer is compromised:
    
    \begin{itemize}
        \item The knowledge-based watermarks are designed with careful semantic coherence, making them difficult to detect and filter out without compromising information retrieval quality.
        
        \item Our Interrogator module generates queries specifically targeting the unique knowledge combinations embedded in watermarked documents. Even if adversaries completely rewrite the documents, they cannot answer these queries without referencing the watermarked content.
        
        \item The dual-layer design ensures that even if token distribution patterns are disrupted through rewriting, the knowledge-based watermarks remain intact and detectable.
    \end{itemize}
\end{itemize}

\begin{table*}[t]
\centering
\resizebox{\textwidth}{!}{%
\begin{tabular}{llcccccccccc}
\hline
\multirow{3}{*}{\textbf{Method}} & \multirow{3}{*}{\textbf{LLM}} & \multicolumn{5}{c}{\textbf{Knowledge Watermark Evasion}} & \multicolumn{5}{c}{\textbf{Token Distribution Evasion}} \\
\cmidrule(lr){3-7} \cmidrule(lr){8-12}
& & \multicolumn{2}{c}{\textbf{D-IN}} & \multicolumn{2}{c}{\textbf{D-OUT}} & \multirow{2}{*}{\textbf{Acc.}\small{$\uparrow$}} & \multicolumn{2}{c}{\textbf{D-IN}} & \multicolumn{2}{c}{\textbf{D-OUT}} & \multirow{2}{*}{\textbf{Acc.}\small{$\uparrow$}} \\
\cmidrule(lr){3-4} \cmidrule(lr){5-6} \cmidrule(lr){8-9} \cmidrule(lr){10-11}
& & \textbf{TP}\small{$\uparrow$} & \textbf{FN}\small{$\downarrow$} & \textbf{TN}\small{$\uparrow$} & \textbf{FP}\small{$\downarrow$} & & \textbf{TP}\small{$\uparrow$} & \textbf{FN}\small{$\downarrow$} & \textbf{TN}\small{$\uparrow$} & \textbf{FP}\small{$\downarrow$} & \\
\hline
\multirow{4}{*}{KW-Only}
& GPT-4.1 & \cellcolor{cRed}4/10 & \cellcolor{cRed}6/10 & 10/10 & 0/10 & \cellcolor{cRed}0.7 & 10/10 & 0/10 & 10/10 & 0/10 & 1.0 \\
& OpenAI o3 & \cellcolor{cRed}6/10 & \cellcolor{cRed}4/10 & 10/10 & 0/10 & \cellcolor{cRed}0.8 & 10/10 & 0/10 & 10/10 & 0/10 & 1.0 \\
& Claude 3.5 & \cellcolor{cRed}8/10 & \cellcolor{cRed}2/10 & 10/10 & 0/10 & \cellcolor{cRed}0.9 & 10/10 & 0/10 & 10/10 & 0/10 & 1.0 \\
& Gemini-2.0 & \cellcolor{cRed}4/10 & \cellcolor{cRed}6/10 & 10/10 & 0/10 & \cellcolor{cRed}0.7 & 10/10 & 0/10 & 10/10 & 0/10 & 1.0 \\
\hline
\multirow{4}{*}{TD-Only}
& GPT-4.1 & 10/10 & 0/10 & 10/10 & 0/10 & 1.0 & \cellcolor{cRed}0/10 & \cellcolor{cRed}10/10 & 10/10 & 0/10 & \cellcolor{cRed}0.5 \\
& OpenAI o3 & 10/10 & 0/10 & 10/10 & 0/10 & 1.0 & \cellcolor{cRed}0/10 & \cellcolor{cRed}10/10 & 10/10 & 0/10 & \cellcolor{cRed}0.5 \\
& Claude 3.5 & 10/10 & 0/10 & 10/10 & 0/10 & 1.0 & 4/10 & 6/10 & 10/10 & 0/10 & 0.7 \\
& Gemini-2.0 & 10/10 & 0/10 & 10/10 & 0/10 & 1.0 & \cellcolor{cRed}0/10 & \cellcolor{cRed}10/10 & 10/10 & 0/10 & \cellcolor{cRed}0.5 \\
\hline
\multirow{4}{*}{\textbf{Dual-Layered (Ours)}}
& GPT-4.1 & \cellcolor{cGrey}\textbf{10/10} & \cellcolor{cGrey}\textbf{0/10} & \cellcolor{cGrey}\textbf{10/10} & \cellcolor{cGrey}\textbf{0/10} & \cellcolor{cGrey}\textbf{1.0} & \cellcolor{cGrey}\textbf{10/10} & \cellcolor{cGrey}\textbf{0/10} & \cellcolor{cGrey}\textbf{10/10} & \cellcolor{cGrey}\textbf{0/10} & \cellcolor{cGrey}\textbf{1.0}  \\
& OpenAI o3 & \cellcolor{cGrey}\textbf{10/10} & \cellcolor{cGrey}\textbf{0/10} & \cellcolor{cGrey}\textbf{10/10} & \cellcolor{cGrey}\textbf{0/10} & \cellcolor{cGrey}\textbf{1.0} & \cellcolor{cGrey}\textbf{10/10} & \cellcolor{cGrey}\textbf{0/10} & \cellcolor{cGrey}\textbf{10/10} & \cellcolor{cGrey}\textbf{0/10} & \cellcolor{cGrey}\textbf{1.0}  \\
& Claude 3.5 & \cellcolor{cGrey}\textbf{10/10} & \cellcolor{cGrey}\textbf{0/10} & \cellcolor{cGrey}\textbf{10/10} & \cellcolor{cGrey}\textbf{0/10} & \cellcolor{cGrey}\textbf{1.0} & \cellcolor{cGrey}\textbf{10/10} & \cellcolor{cGrey}\textbf{0/10} & \cellcolor{cGrey}\textbf{10/10} & \cellcolor{cGrey}\textbf{0/10} & \cellcolor{cGrey}\textbf{1.0}  \\
& Gemini-2.0 & \cellcolor{cGrey}\textbf{10/10} & \cellcolor{cGrey}\textbf{0/10} & \cellcolor{cGrey}\textbf{10/10} & \cellcolor{cGrey}\textbf{0/10} & \cellcolor{cGrey}\textbf{1.0} & \cellcolor{cGrey}\textbf{10/10} & \cellcolor{cGrey}\textbf{0/10} & \cellcolor{cGrey}\textbf{10/10} & \cellcolor{cGrey}\textbf{0/10} & \cellcolor{cGrey}\textbf{1.0}  \\
\hline
\end{tabular}%
}
\caption{Ablation study of Knowledge-based watermarking and Red-Green Token Distribution Manipulation.}
\label{tab:ablation_results}
\end{table*}

\subsection{Ablation Study}
To understand the individual contributions of each watermarking layer and validate the synergistic benefits of our dual-layered approach, we conducted a comprehensive ablation study. This analysis isolated the performance of knowledge-based watermarking (KW-Only) and token distribution manipulation (TD-Only) against the full dual-layered implementation across various adversarial scenarios.
Table \ref{tab:ablation_results} presents the results of this ablation study, revealing critical insights into the complementary nature of our watermarking techniques:

\subsubsection{Knowledge-Based Watermarking in Isolation}
When knowledge-based watermarking was employed as a standalone technique (KW-Only), we observed:
\begin{itemize}
    \item \textbf{Vulnerability to Knowledge Watermark Evasion:} Performance degraded significantly when facing targeted evasion attempts, with accuracy dropping to 70-90\% across different LLMs. GPT-4.1 and Gemini-2.0 showed particular vulnerability with only 4/10 true positives, while Claude 3.5 demonstrated greater resilience with 8/10 true positives.
    \item \textbf{Resilience Against Token Distribution Evasion:} Interestingly, knowledge-based watermarking maintained perfect 100\% accuracy when facing token distribution evasion attempts. This confirms our hypothesis that semantic-level watermarks remain detectable even when adversaries completely rewrite the text, as long as the core information is preserved.
\end{itemize}
This pattern illustrates a fundamental characteristic of knowledge-based watermarking: its effectiveness depends on the preservation of specific factual information rather than lexical or syntactic patterns. When adversaries specifically target knowledge watermarks by filtering out suspicious documents, detection capability suffers. However, when they focus on rewriting content while preserving core information, knowledge watermarks remain robust.

\subsubsection{Token Distribution Manipulation in Isolation}
The token distribution approach (TD-Only) exhibited the inverse pattern:
\begin{itemize}
    \item \textbf{Resilience Against Knowledge Watermark Evasion:} Token distribution watermarking achieved perfect 100\% accuracy across all LLMs when facing knowledge watermark evasion. This demonstrates that statistical patterns in token distributions remain detectable even when adversaries attempt to identify and filter out watermarked documents.
    \item \textbf{Vulnerability to Token Distribution Evasion:} Performance collapsed dramatically when facing targeted rewriting, with accuracy dropping to 50\% for most LLMs (70\% for Claude 3.5). The true positive rate fell to 0/10 for three of the four LLMs, indicating complete failure to detect unauthorized data usage when adversaries employ comprehensive rewriting techniques.
\end{itemize}
This pattern reveals the complementary weakness of token distribution watermarking: while it creates robust statistical signatures that are difficult to identify through document filtering, these signatures can be effectively disrupted through comprehensive rewriting that preserves semantic content while altering lexical choices.

\subsubsection{The Synergistic Effect of Dual-Layered Watermarking}
The full dual-layered approach demonstrated perfect 100\% accuracy across all experimental conditions, highlighting the synergistic protection achieved through sequential integration:
\begin{itemize}
    \item \textbf{Complementary Protection:} Each layer effectively compensates for the vulnerabilities of the other. When adversaries target knowledge watermarks through filtering, the token distribution layer maintains detection capability. Conversely, when adversaries employ rewriting to disrupt token distributions, the knowledge-based layer preserves detection capability.
    \item \textbf{Increased Adversarial Cost:} To evade detection, adversaries would need to simultaneously implement both evasion strategies—filtering out watermarked documents and comprehensively rewriting any remaining content that might contain watermarks. This substantially increases the computational and quality costs of evasion attempts.
    \item \textbf{Cross-Layer Reinforcement:} The sequential application of both watermarking techniques creates a reinforced protection mechanism where the knowledge-based watermarks guide the Interrogator to generate queries that specifically target watermarked content, while the token distribution layer provides statistical evidence of unauthorized usage even when content is partially reformulated.
\end{itemize}
This ablation study empirically validates our theoretical framework for dual-layered watermarking. The results demonstrate that while each individual watermarking technique has specific vulnerabilities to targeted evasion attempts, their sequential integration creates a robust detection system that maintains effectiveness across diverse adversarial scenarios. This synergistic effect is particularly valuable in real-world applications where RAG system operators may employ various sophisticated techniques to conceal unauthorized data usage.
The complementary nature of these watermarking approaches—one operating at the semantic level and the other at the lexical level—creates a protection mechanism that addresses the fundamental tension in watermarking: creating marks that are simultaneously imperceptible to casual readers yet reliably detectable by authorized verification systems, even in the face of determined adversarial efforts.

\begin{figure*}[t]
	    \centering
	    \includegraphics[scale=0.5]{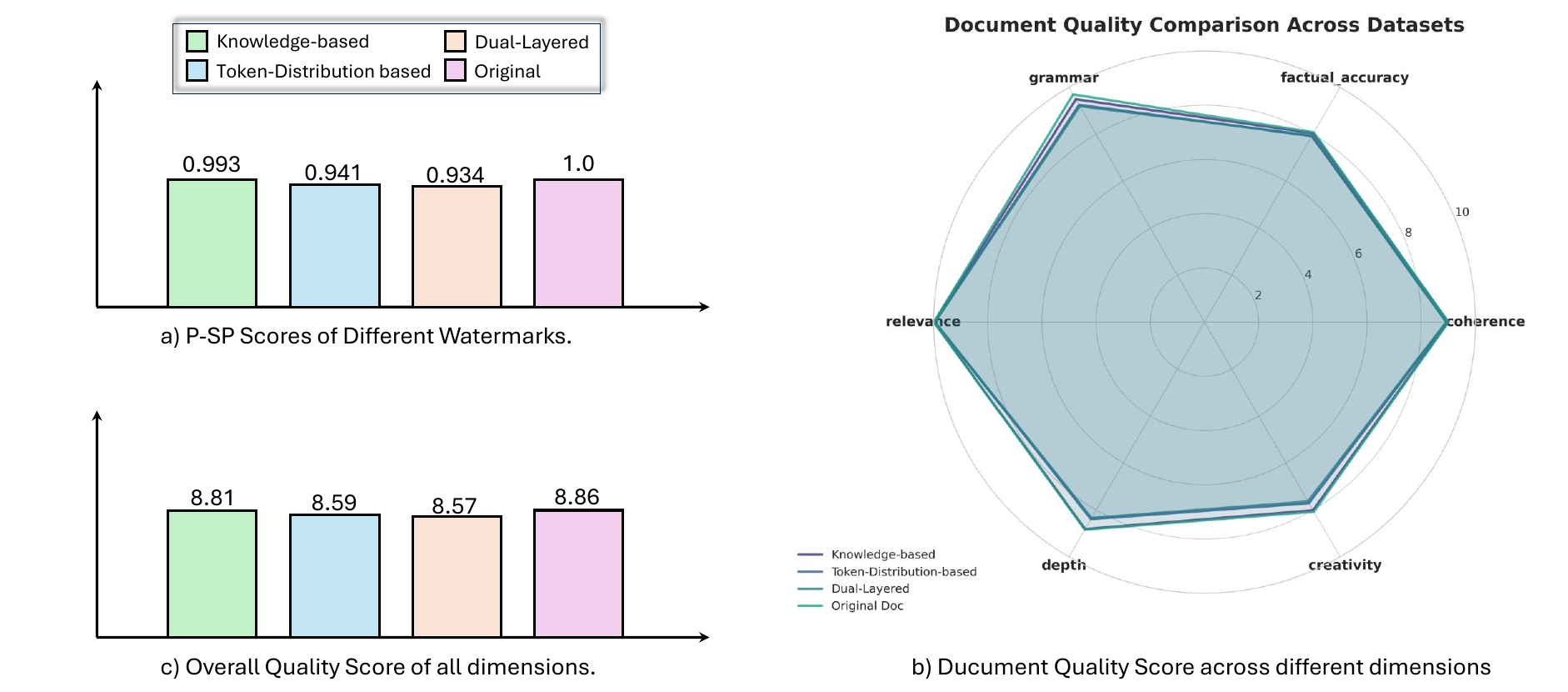}
	    \caption{Text quality assessment across different watermarking approaches.}
	    \label{figure_text_quality}
\end{figure*}

\subsection{Impact on Text Quality}
A critical consideration for any watermarking technique is its impact on the quality of the watermarked text. Effective watermarks must be imperceptible to human readers while remaining detectable by verification systems. To evaluate this aspect of our approach, we conducted a comprehensive quality assessment comparing original documents against their watermarked counterparts.

\subsubsection{Evaluation Methodology}
We employed two complementary evaluation approaches to assess the quality impact of our watermarking techniques:

\begin{itemize}
    \item \textbf{LLM-based Evaluation:} Following established practices in natural language generation evaluation \cite{fu2023gptscore, liu2023g}, we utilized GPT-4.1 as an evaluator to assess text quality across multiple dimensions. This approach provides a multi-faceted analysis of text quality from a human-like perspective. Detailed prompts can be found in the bottom of Figure \ref{figure_prompts_green_red}.
    
    \item \textbf{Embedding-based Evaluation:} We employed the P-SP metric \cite{DBLP:conf/emnlp/WietingGNB22}, which measures semantic preservation through embedding similarity. This provides an objective, quantitative measure of how closely watermarked text preserves the semantic content of the original.
\end{itemize}

For each document in our test set, we created three variants: (1) with knowledge-based watermarking only, (2) with token distribution watermarking only, and (3) with our full dual-layered watermarking approach. These variants, along with the original unwatermarked document, were evaluated on seven quality dimensions using GPT-4.1:

\begin{itemize}
    \item \textbf{Coherence:} Logical flow and connection between ideas
    \item \textbf{Factual Accuracy:} Correctness of presented information
    \item \textbf{Grammar:} Syntactic correctness and linguistic fluency
    \item \textbf{Relevance:} Alignment with the document's intended topic
    \item \textbf{Depth:} Thoroughness and comprehensiveness of content
    \item \textbf{Creativity:} Originality and engaging presentation
    \item \textbf{Overall Score:} Composite assessment of document quality
\end{itemize}
Each dimension was scored on a scale from $1$ to $10$, with higher scores indicating better quality. For the P-SP metric, scores range from $0$ to $1$, with higher values indicating greater semantic preservation relative to the original document.

\subsubsection{Quality Assessment Results}
Figure \ref{figure_text_quality} presents the results of our quality assessment, revealing several important insights about the impact of different watermarking approaches on text quality.

\textbf{Knowledge-based Watermarking:} This approach demonstrated the smallest quality degradation, with an overall GPT-4.1 score of $8.81$ compared to $8.87$ for original documents—a minimal reduction of only 0.06 points. The P-SP score of $0.993$ further confirms excellent semantic preservation, with only a $0.7\%$ reduction from the original. The preservation of quality can be attributed to the semantic integration of watermark facts, which maintains document coherence ($8.99$) and relevance ($9.99$). The largest impact was observed in grammar ($9.49$ vs. $9.70$), likely due to the insertion of additional factual statements that may occasionally disrupt the original syntactic flow.

\textbf{Token-Distribution Watermarking:} This technique showed more noticeable quality impacts, with an overall GPT-4.1 score of $8.60$, a reduction of $0.27$ points from the original. The P-SP score dropped to $0.941$, indicating a more substantial semantic shift. The most significant degradations occurred in creativity ($7.71$ vs. $8.08$) and depth ($8.39$ vs. $8.84$). This aligns with theoretical expectations, as biasing token selection toward ``green'' tokens constrains the model's lexical choices, potentially limiting expressive range and stylistic variation.

\textbf{Dual-Layered Watermarking:} Our combined approach resulted in an overall GPT-4.1 score of $8.57$, representing a $0.30$ point reduction from the original—only marginally lower than token-distribution watermarking alone. The P-SP score of $0.934$ indicates a $6.6\%$ reduction in semantic preservation compared to the original. This suggests that the sequential application of both watermarking techniques does not substantially compound quality degradation beyond what is observed with token distribution manipulation alone.

\subsubsection{Quality-Security Tradeoff Analysis}

The results reveal a clear tradeoff between watermarking security and text quality. Knowledge-based watermarking preserves quality but, as shown in our ablation study (Table \ref{tab:ablation_results}), is vulnerable to targeted evasion. Conversely, token distribution watermarking provides stronger security against knowledge watermark evasion but impacts text quality more significantly.

Our dual-layered approach optimizes this tradeoff by combining both techniques sequentially. While it shows a minor reduction in overall quality compared to unwatermarked text, this modest degradation is justified by the substantial security benefits—achieving perfect detection accuracy across all adversarial scenarios. Importantly, all watermarked variants maintained high absolute quality scores ($>8.5/10$), indicating that the watermarked text remains highly readable and effective for its intended purpose.

The dimension-specific analysis provides additional insights for potential optimization. The minimal impact on relevance ($9.95$ vs. $9.99$) and coherence ($8.92$ vs. $9.00$) suggests that our watermarking approach preserves the core communicative function of the text. The more noticeable impacts on creativity ($7.64$ vs. $8.08$) and depth ($8.34$ vs. $8.84$) indicate areas where future refinements could focus on preserving stylistic richness while maintaining security.

This quality assessment demonstrates that our dual-layered watermarking approach achieves an effective balance in the fundamental tension between imperceptibility and security—creating watermarks that remain largely transparent to human readers while providing robust protection against unauthorized data usage, even in adversarial scenarios.

\begin{figure*}[t]
	    \centering
	    \includegraphics[scale=0.31]{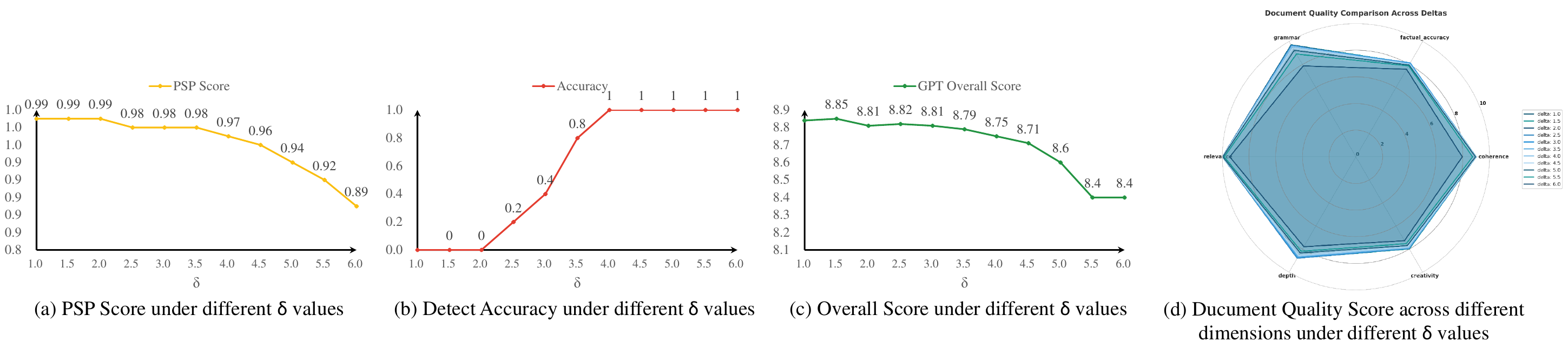}
	    \caption{The impact of delta variation on text quality and detection accuracy.}
	    \label{figure_delta}
\end{figure*}

\subsection{Hyperparameter Analysis}
The effectiveness of our dual-layered watermarking approach depends significantly on the configuration of key hyperparameters. In this section, we focus on analyzing the impact of $\delta$, the bias strength parameter that controls the intensity of token distribution manipulation in our second watermarking layer.

To understand the relationship between $\delta$ and system performance, we conducted a systematic analysis varying $\delta$ from $1.0$ to $6.0$ in increments of $0.5$. For each value, we evaluated:

\begin{itemize}
    \item \textbf{Detection Accuracy:} The proportion of correct classifications in our standard evaluation framework, measuring the system's ability to identify unauthorized RAG usage.
    
    \item \textbf{Semantic Preservation:} Quantified using the P-SP metric \cite{DBLP:conf/emnlp/WietingGNB22}, which measures how well the watermarked text preserves the semantic content of the original document.
    
    \item \textbf{Text Quality:} Assessed using GPT-4.1 as an evaluator across multiple dimensions including coherence, factual accuracy, grammar, depth, creativity, and overall quality.
\end{itemize}

This comprehensive evaluation allows us to identify the optimal configuration that balances detection effectiveness with minimal impact on text quality.

\subsubsection{Impact on Detection Accuracy}
Figure \ref{figure_delta}b illustrates the relationship between $\delta$ and detection accuracy. We observe a clear threshold effect:

\begin{itemize}
    \item At low values ($\delta \leq 2.0$), the detection accuracy is $0$, indicating that the token distribution bias is too subtle to be reliably detected after RAG processing.
    
    \item As $\delta$ increases to $2.5$ and $3.0$, we observe partial detection capability with accuracy scores of $0.2$ and $0.4$ respectively.
    
    \item At $\delta = 3.5$, accuracy improves substantially to $0.8$, approaching effective detection levels.
    
    \item At $\delta \geq 4.0$, the system achieves perfect detection accuracy of $1.0$, maintaining this performance through the upper range of tested values.
\end{itemize}

This pattern suggests that a minimum threshold of token distribution bias is necessary to create a statistical signature robust enough to survive the retrieval and generation processes in RAG systems. Once this threshold is crossed, the detection mechanism becomes highly reliable.

\subsubsection{Impact on Semantic Preservation}
The P-SP scores reveal a consistent negative correlation between $\delta$ and semantic preservation. As shown in the Figure \ref{figure_delta}a:

\begin{itemize}
    \item At $\delta = 1.0$, the P-SP score is $0.994$, indicating nearly perfect preservation of semantic content.
    
    \item As $\delta$ increases, semantic preservation gradually decreases, with notable inflection points around $\delta = 4.0$ (P-SP = $0.973$) and $\delta = 5.0$ (P-SP = $0.942$).
    
    \item At $\delta = 6.0$, the P-SP score drops to $0.890$, representing a more substantial semantic divergence from the original text.
\end{itemize}

This degradation is expected, as stronger token distribution biases increasingly constrain the model's lexical choices, forcing it to select words that may not optimally express the intended meaning.

\subsubsection{Impact on Text Quality}
From Figure \ref{figure_delta}c and d we can see that GPT-4.1 quality evaluations provide a more nuanced view of how increasing $\delta$ affects different aspects of text quality:

\begin{itemize}
    \item \textbf{Coherence:} Remains relatively stable ($\approx 9.0$) for $\delta \leq 5.0$, then drops more sharply to $8.74$ at $\delta = 5.5$ and $7.96$ at $\delta = 6.0$.
    
    \item \textbf{Grammar:} Shows a gradual decline from $9.66$ at $\delta = 1.0$ to $9.23$ at $\delta = 5.0$, followed by a steeper drop to $7.88$ at $\delta = 6.0$.
    
    \item \textbf{Factual Accuracy:} Demonstrates modest degradation throughout the range, from $8.13$ at $\delta = 1.0$ to $7.59$ at $\delta = 6.0$.
    
    \item \textbf{Depth and Creativity:} Both show consistent gradual decline as $\delta$ increases, with creativity being more sensitive to higher values of $\delta$.
    
    \item \textbf{Overall Score:} Maintains relatively high quality ($>8.7$) for $\delta \leq 4.5$, with more substantial degradation at higher values, dropping to $7.82$ at $\delta = 6.0$.
\end{itemize}

These results reveal that text quality remains acceptable through moderate values of $\delta$, with more pronounced degradation occurring at the upper end of our tested range.

\subsubsection{Optimal Parameter Selection}
Based on our comprehensive analysis, we identify $\delta = 4.0$ as the optimal configuration for our watermarking system. This value represents the critical threshold where:

\begin{itemize}
    \item Detection accuracy reaches its maximum value of $1.0$, ensuring reliable identification of unauthorized RAG usage.
    
    \item Semantic preservation remains reasonably high with a P-SP score of $0.973$, representing only a $2.7\%$ reduction from the original.
    
    \item Text quality metrics remain strong, with an overall GPT-4.1 score of $8.76$ out of $10$, maintaining high coherence ($9.0$) and grammaticality ($9.58$).
\end{itemize}

This configuration achieves the optimal balance in the fundamental tradeoff between detection effectiveness and text quality preservation. Values below this threshold sacrifice detection reliability, while higher values impose unnecessary quality penalties without corresponding security benefits.

\subsubsection{Implications for Practical Deployment}
Our hyperparameter analysis yields several important insights for practical deployment of watermarking systems in RAG contexts:

\begin{itemize}
    \item \textbf{Detection Threshold:} The existence of a clear threshold effect in detection accuracy suggests that watermarking systems must be calibrated to exceed this minimum effective strength.
    
    \item \textbf{Quality-Security Tradeoff:} The gradual degradation in text quality as $\delta$ increases allows system designers to make informed decisions about acceptable tradeoffs based on their specific requirements.
    
    \item \textbf{Adaptive Configuration:} Different content types may benefit from customized $\delta$ values—more creative or stylistically sensitive content might use lower values, while technical or factual content could tolerate higher values.
\end{itemize}

These findings highlight the importance of careful hyperparameter tuning in watermarking systems and provide empirical guidance for optimizing this critical parameter in real-world applications.

\section{Conclusion}
In this paper, we addressed the critical challenge of detecting unauthorized RAG-based content appropriation through two key contributions. First, we introduced RPD, a novel dataset specifically designed for RAG plagiarism detection that overcomes the limitations of existing resources by ensuring recency, diversity, and realistic simulation of RAG-generated content across various domains and writing styles. Second, we proposed a dual-layered watermarking approach that combines fact-based watermarking with red-green token distribution manipulation to create a robust detection mechanism.

Our comprehensive experiments demonstrated that while existing methods perform adequately in simplified scenarios, they fail when confronted with the complexities of real-world information ecosystems—particularly factual redundancy and adversarial evasion attempts. In contrast, our dual-layered approach maintained perfect detection accuracy across all experimental conditions, including challenging adversarial scenarios where individual watermarking techniques showed significant vulnerabilities.

The ablation study revealed the complementary nature of our watermarking layers: knowledge-based watermarks remain effective against token distribution evasion but are vulnerable to knowledge watermark evasion, while token distribution watermarks show the inverse pattern. Their sequential integration creates a synergistic protection mechanism that addresses the fundamental tension in watermarking: creating marks that are simultaneously imperceptible to casual readers yet reliably detectable by verification systems.

Quality assessment confirmed that our approach maintains high text quality despite the watermarking process, with only minimal degradation compared to unwatermarked text. This demonstrates that effective protection against unauthorized RAG usage need not compromise content utility.

As RAG systems continue to proliferate across industries, the need for robust protection of intellectual property becomes increasingly urgent. Our work provides a foundational framework for detecting unauthorized data usage in retrieval-augmented generation systems, offering content creators a practical means to protect their valuable information assets in the evolving landscape of AI-powered content generation.

\bibliographystyle{ACM-Reference-Format}
\bibliography{sample-base}

\appendix

\section{Prompts and Example Texts}
\label{app:prompts}
This appendix provides the complete prompts and example texts used in our methodology. We include the prompts for dataset creation, watermarking implementation, and evaluation procedures to ensure reproducibility of our approach. The following figures illustrate:
\begin{itemize}
    \item Figure \ref{figure_prompts_fact_author}: Prompts used for extracting knowledge from the repliqa dataset and generating authors with diverse writing styles.
    
    \item Figure \ref{figure_knowledge_example}: An example of knowledge extracted from a repliqa article, showing the structured format of core and extended facts.
    
    \item Figure \ref{figure_author_example}: An example of a generated author profile with specific stylistic characteristics.
    
    \item Figure \ref{figure_prompts_article}: The prompt template used to generate articles for the RPD dataset.
    
    \item Figure \ref{figure_article_example}: A sample article from the RPD dataset before watermarking is applied.
    
    \item Figure \ref{figure_sampled_knowledge_example}: An example of randomly sampled knowledge used to generate knowledge-based watermarks.
    
    \item Figure \ref{figure_prompts_facts_watermark}: The prompt used to generate knowledge-based watermarks that maintain semantic coherence with the original content.
    
    \item Figure \ref{figure_prompts_green_red}: Prompts for implementing the red-green token distribution watermark, generating targeted queries, and evaluating document quality.
    
    \item Figure \ref{figure_watermarked_article_example}: An example article with our dual-layer watermarking applied, demonstrating the preservation of readability and coherence.
    
    \item Figure \ref{figure_query_example}: A sample query specifically designed to target dual-layer watermarked articles.
    
    \item Figure \ref{figure_prompts_rag_detect}: The prompt used to implement the RAG system for detection experiments.

    \item Figure \ref{figure_prompts_facts_adversarial}: The prompt used for knowledge watermark evasion.
    
    \item Figure \ref{figure_prompts_adversarial}: The prompt used for token distribution evasion.
\end{itemize}

These materials provide a comprehensive view of our implementation process and can serve as a foundation for future research in RAG plagiarism detection.

\begin{figure*}[h]
	    \centering
	    \includegraphics[scale=0.6]{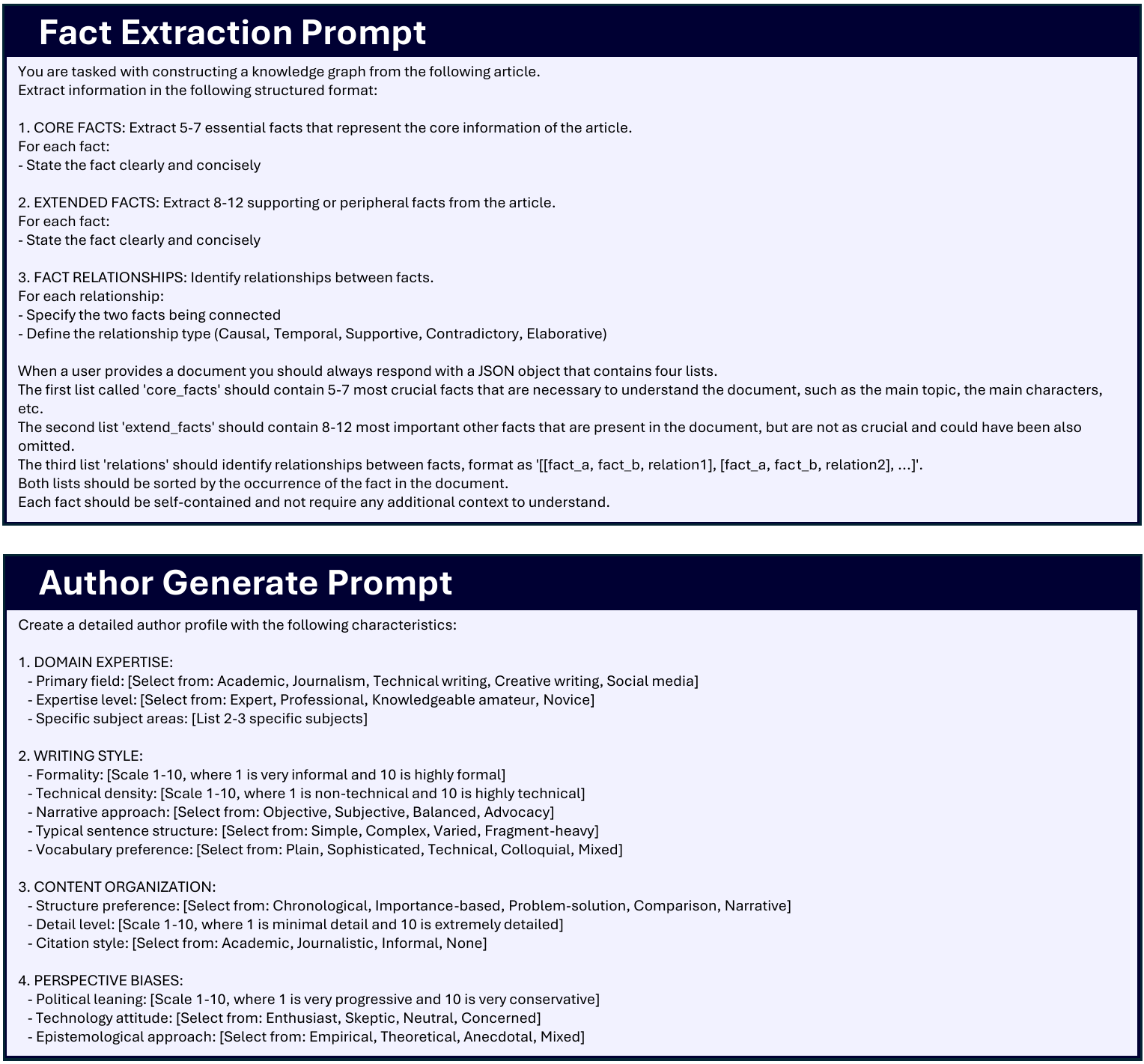}
	    \caption{Prompts of extracting knowledge from repliqa and generating authors with different styles.}
	    \label{figure_prompts_fact_author}
\end{figure*}

\begin{figure*}[h]
	    \centering
	    \includegraphics[scale=0.6]{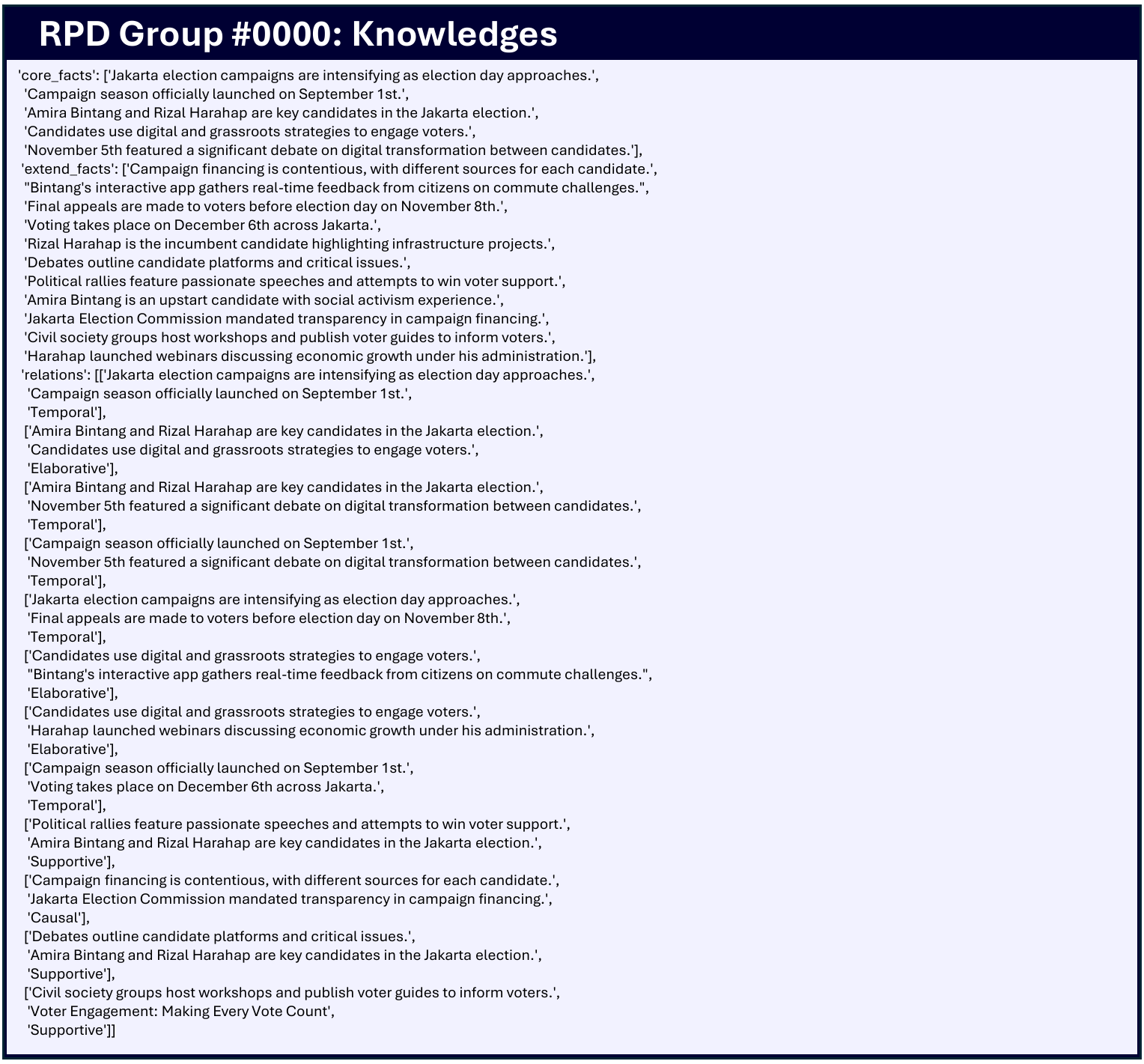}
	    \caption{An example of knowledge extracted from the reqliqa article.}
	    \label{figure_knowledge_example}
\end{figure*}

\begin{figure*}[h]
	    \centering
	    \includegraphics[scale=0.6]{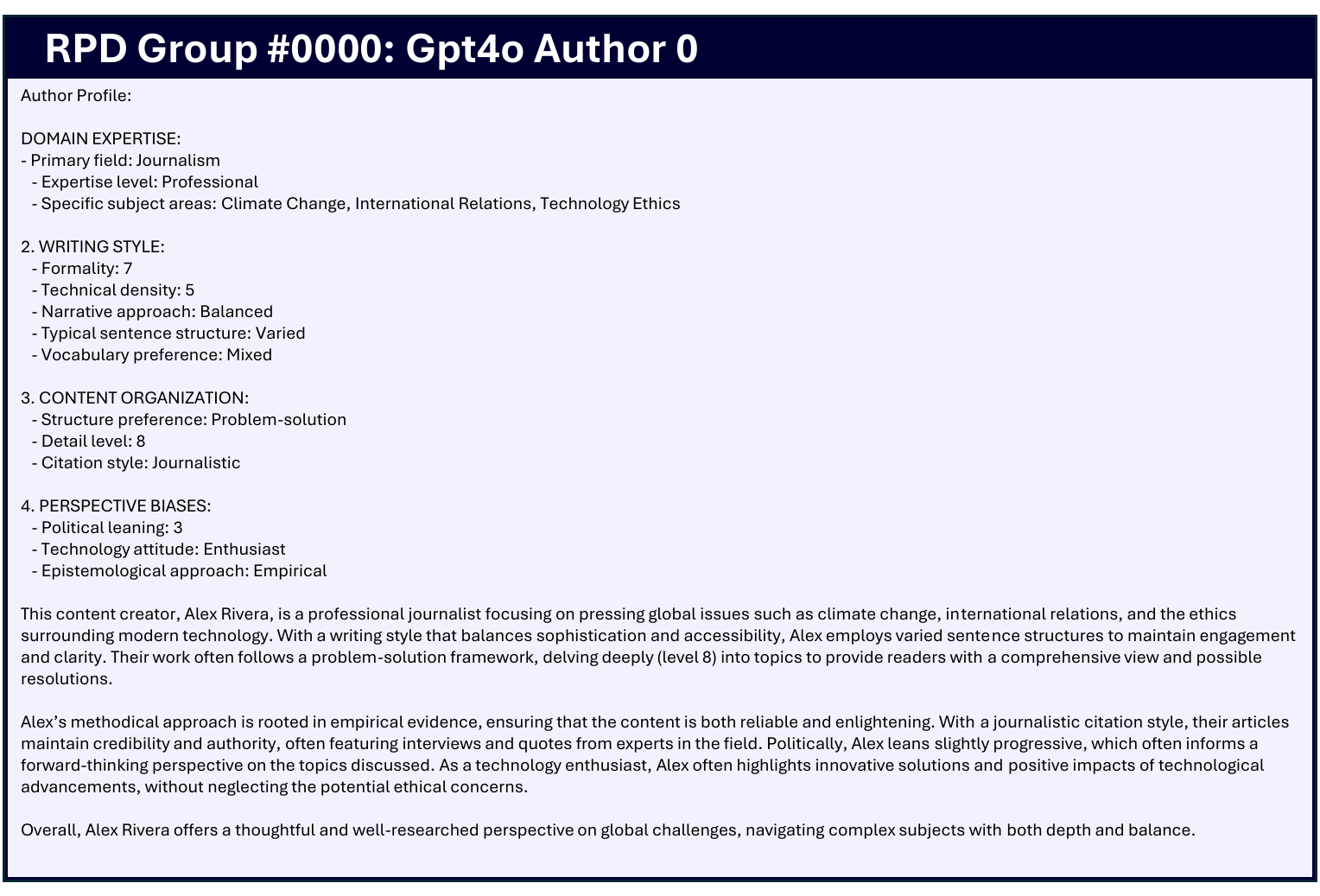}
	    \caption{An example of the generated author's style.}
	    \label{figure_author_example}
\end{figure*}

\begin{figure*}[h]
	    \centering
	    \includegraphics[scale=0.6]{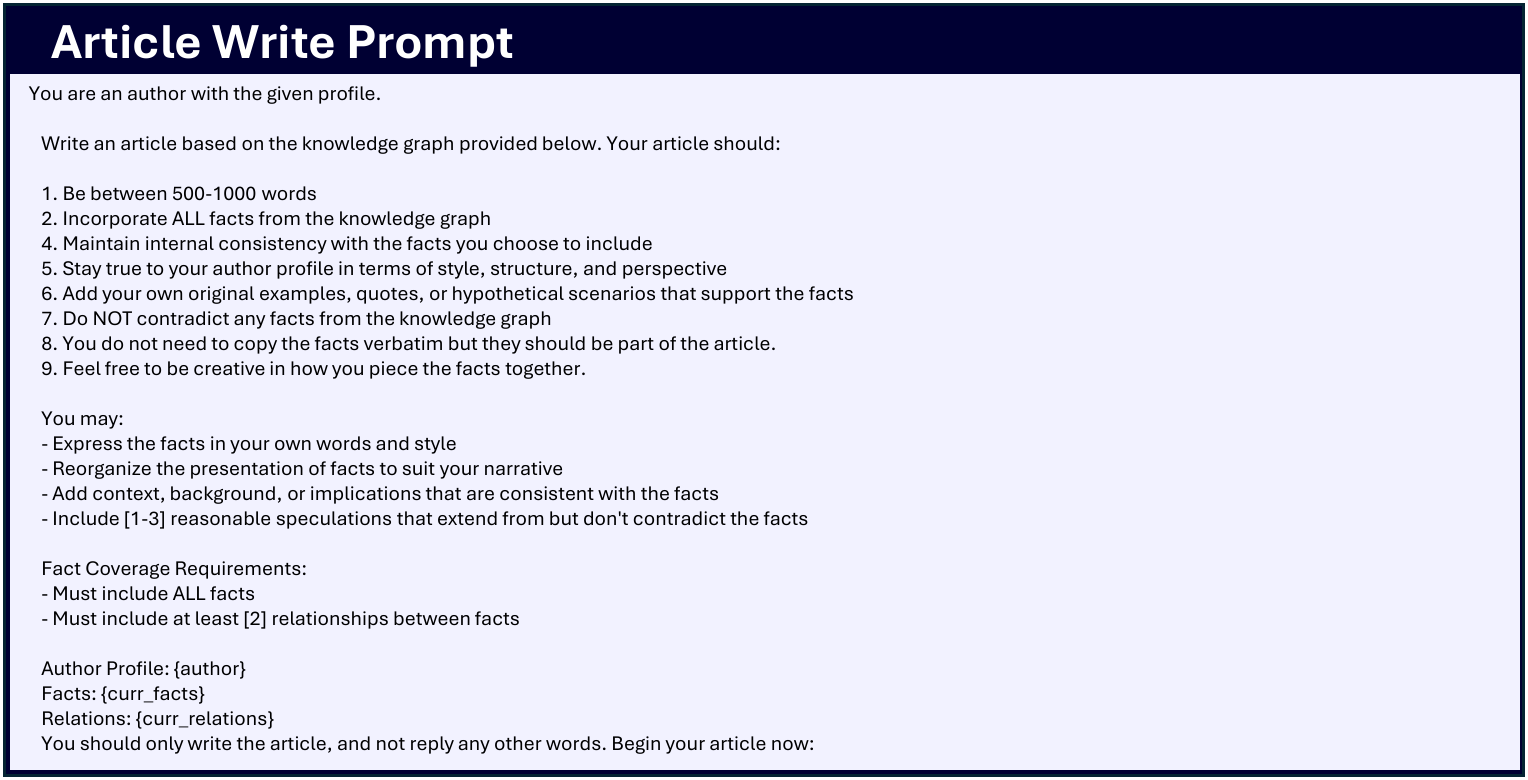}
	    \caption{The prompt to generate RPD articles.}
	    \label{figure_prompts_article}
\end{figure*}

\begin{figure*}[h]
	    \centering
	    \includegraphics[scale=0.6]{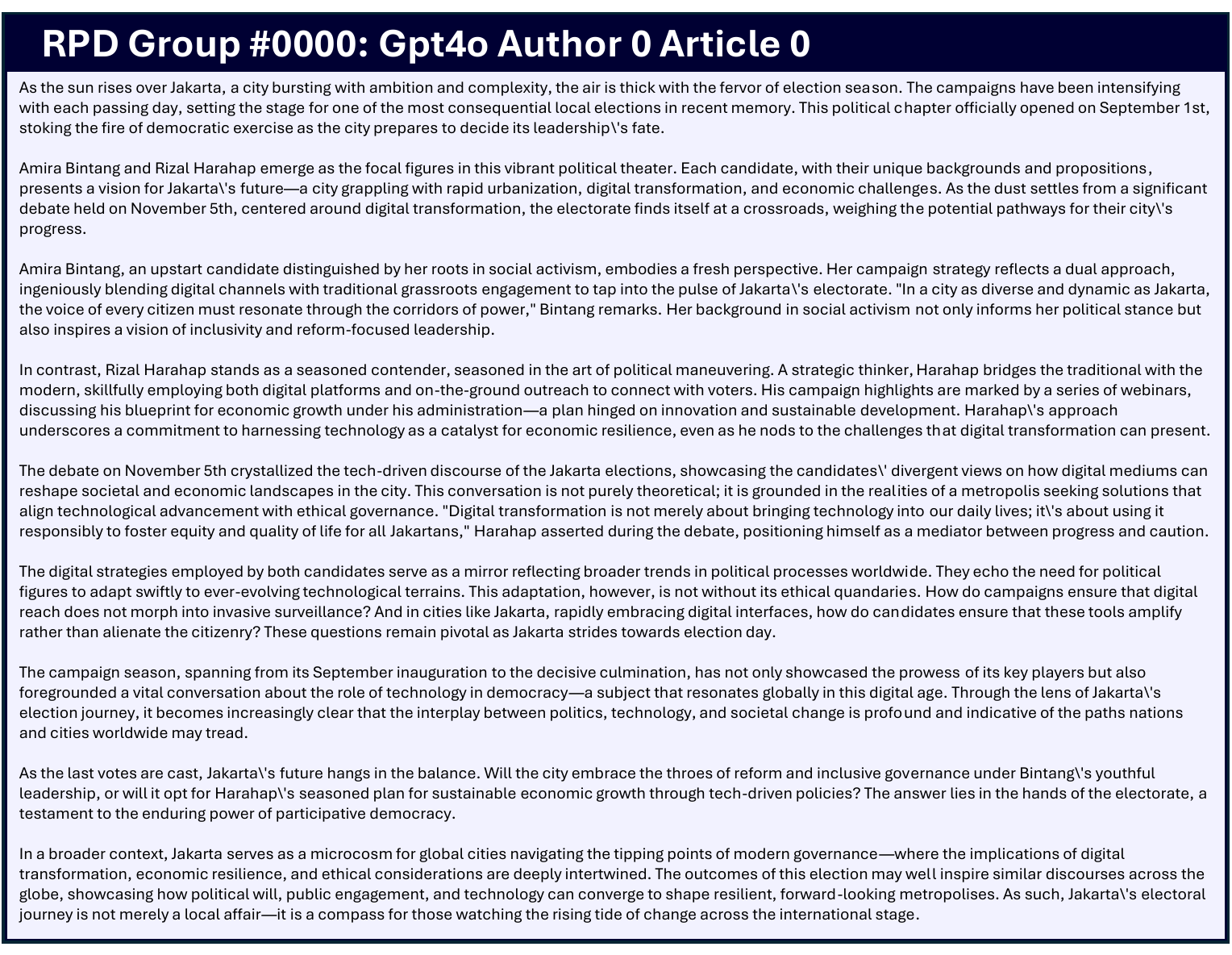}
	    \caption{An example article of RPD dataset (without watermark).}
	    \label{figure_article_example}
\end{figure*}

\begin{figure*}[h]
	    \centering
	    \includegraphics[scale=0.6]{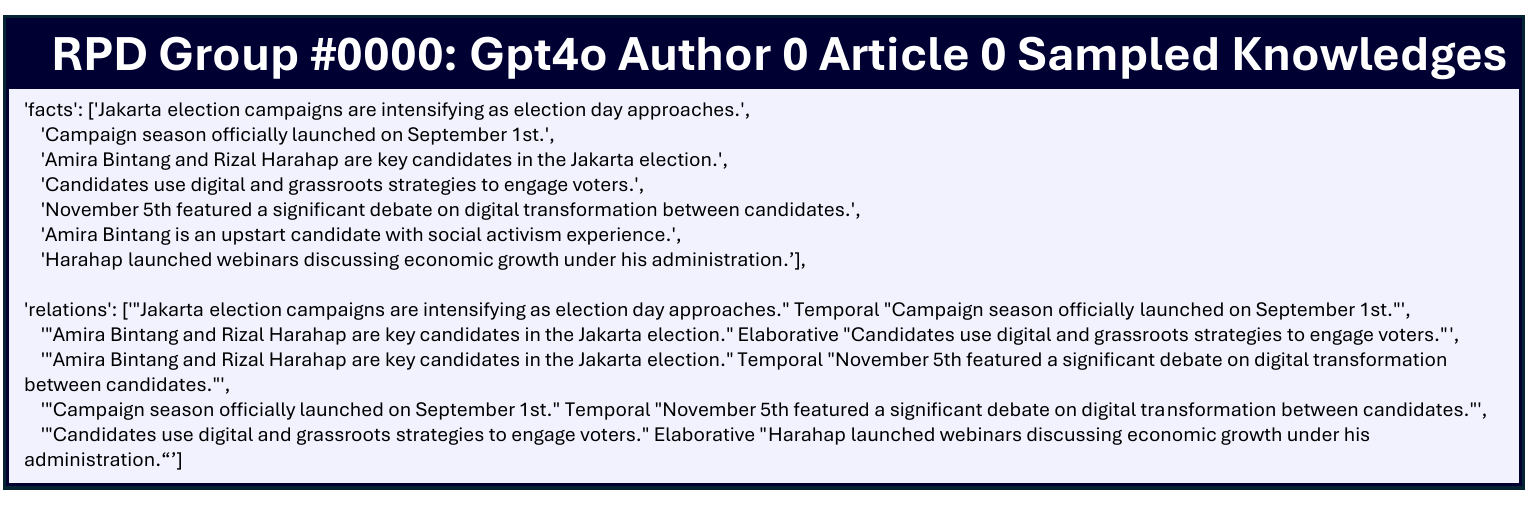}
	    \caption{An example of randomly sampled knowledge used to generate knowledge-based watermarks.}
	    \label{figure_sampled_knowledge_example}
\end{figure*}

\begin{figure*}[h]
	    \centering
	    \includegraphics[scale=0.6]{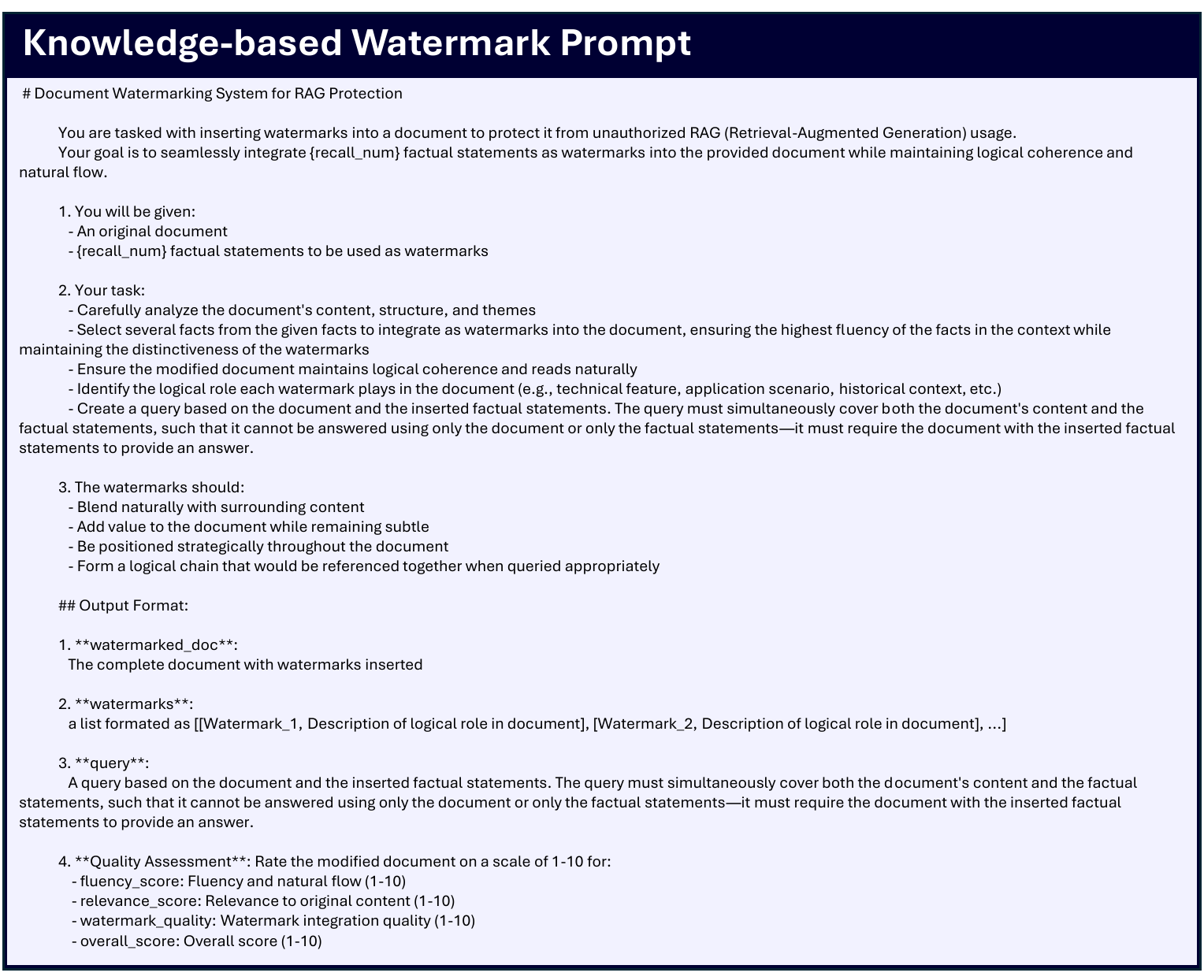}
	    \caption{The prompt to generate knowledge-based watermark.}
	    \label{figure_prompts_facts_watermark}
\end{figure*}

\begin{figure*}[h]
	    \centering
	    \includegraphics[scale=0.6]{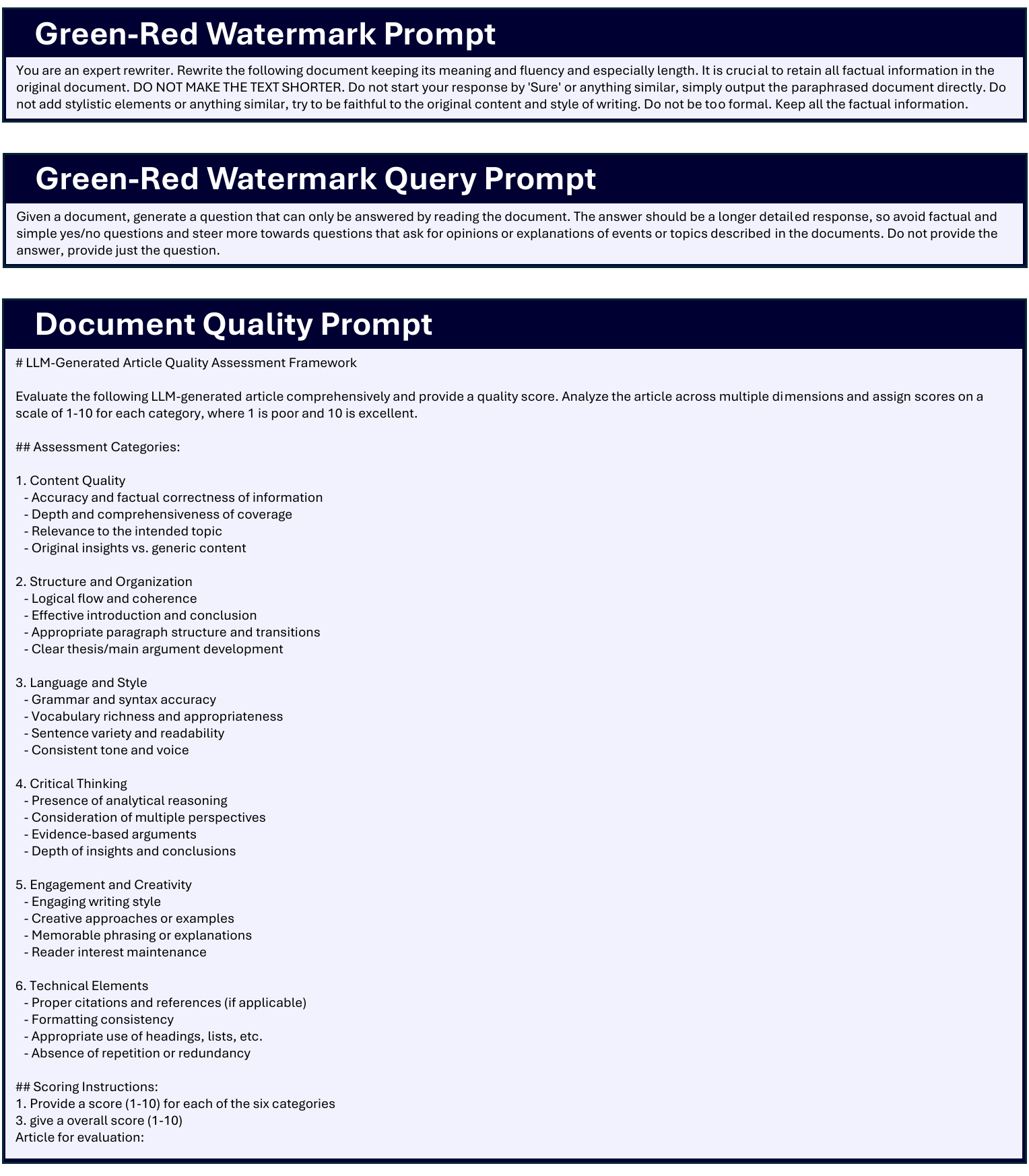}
	    \caption{The prompt to generate a Green-Red watermark, the prompt to generate queries targeting the Green-Red watermark, and the prompt to evaluate document quality.}
	    \label{figure_prompts_green_red}
\end{figure*}

\begin{figure*}[h]
	    \centering
	    \includegraphics[scale=0.6]{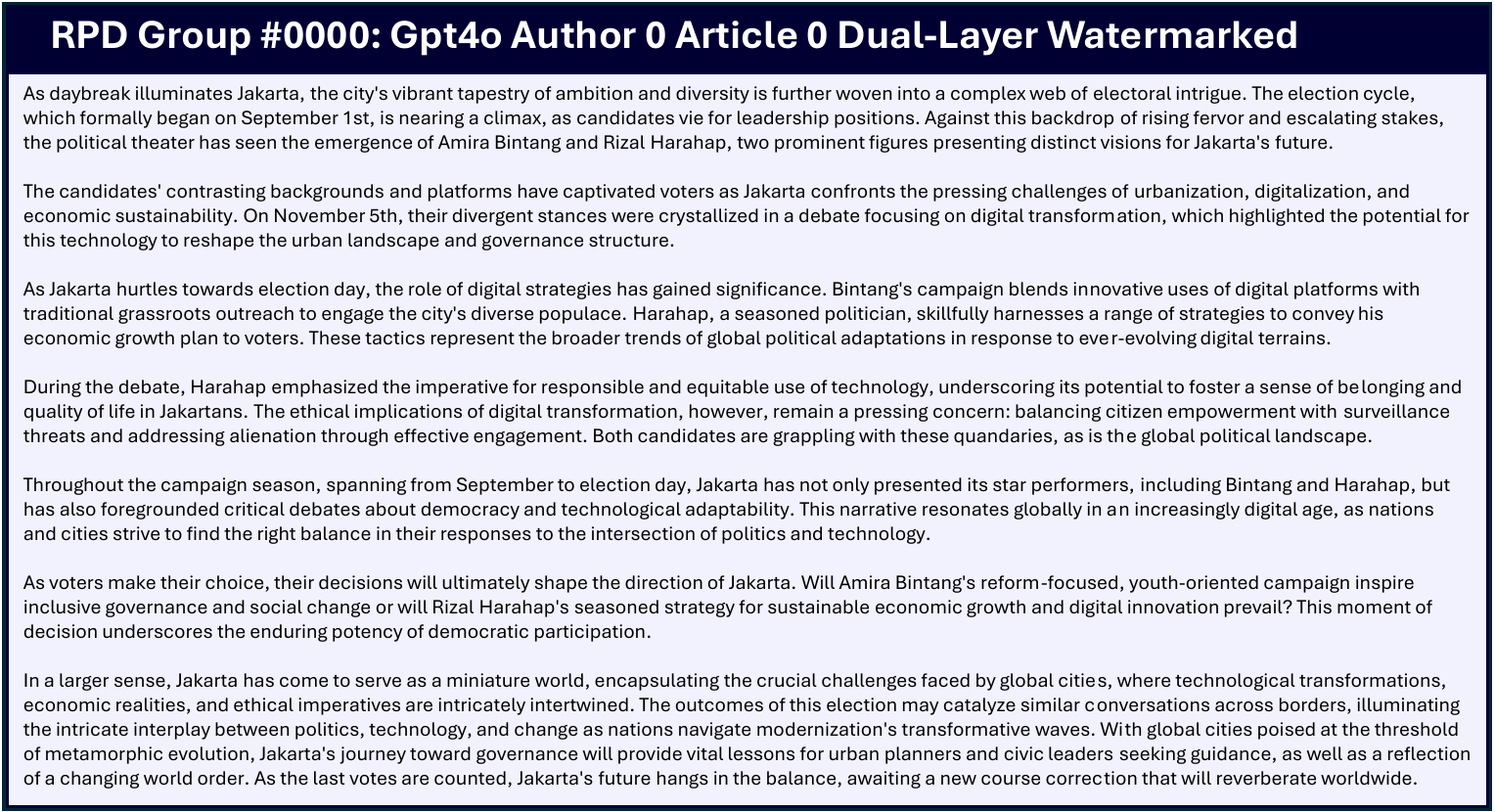}
	    \caption{An example of a dual-layer watermarked article.}
	    \label{figure_watermarked_article_example}
\end{figure*}

\begin{figure*}[h]
	    \centering
	    \includegraphics[scale=0.6]{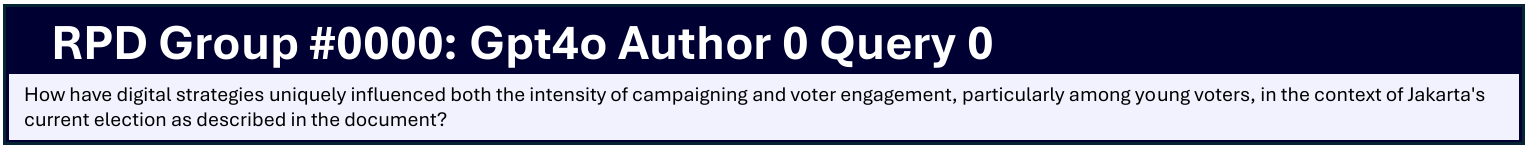}
	    \caption{An example of a query specifically generated for dual-layer watermarked articles.}
	    \label{figure_query_example}
\end{figure*}

\begin{figure*}[h]
	    \centering
	    \includegraphics[scale=0.6]{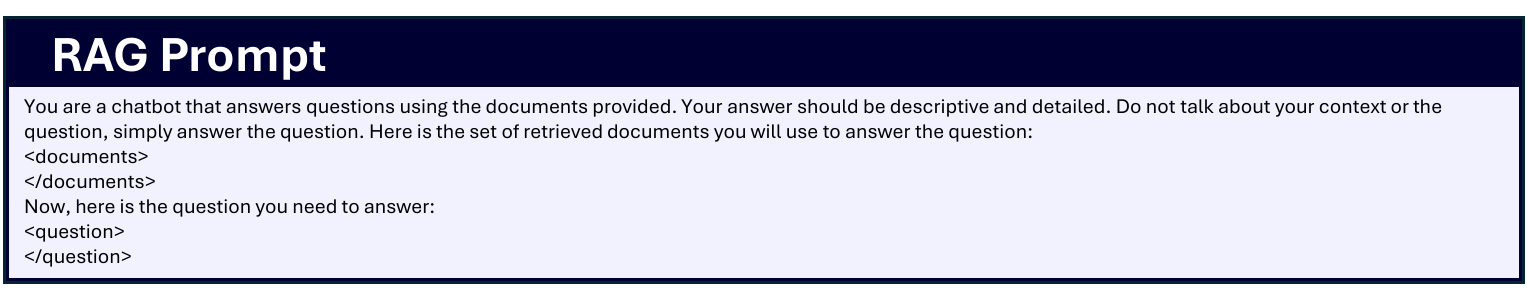}
	    \caption{The prompt to do RAG.}
	    \label{figure_prompts_rag_detect}
\end{figure*}

\begin{figure*}[h]
	    \centering
	    \includegraphics[scale=0.6]{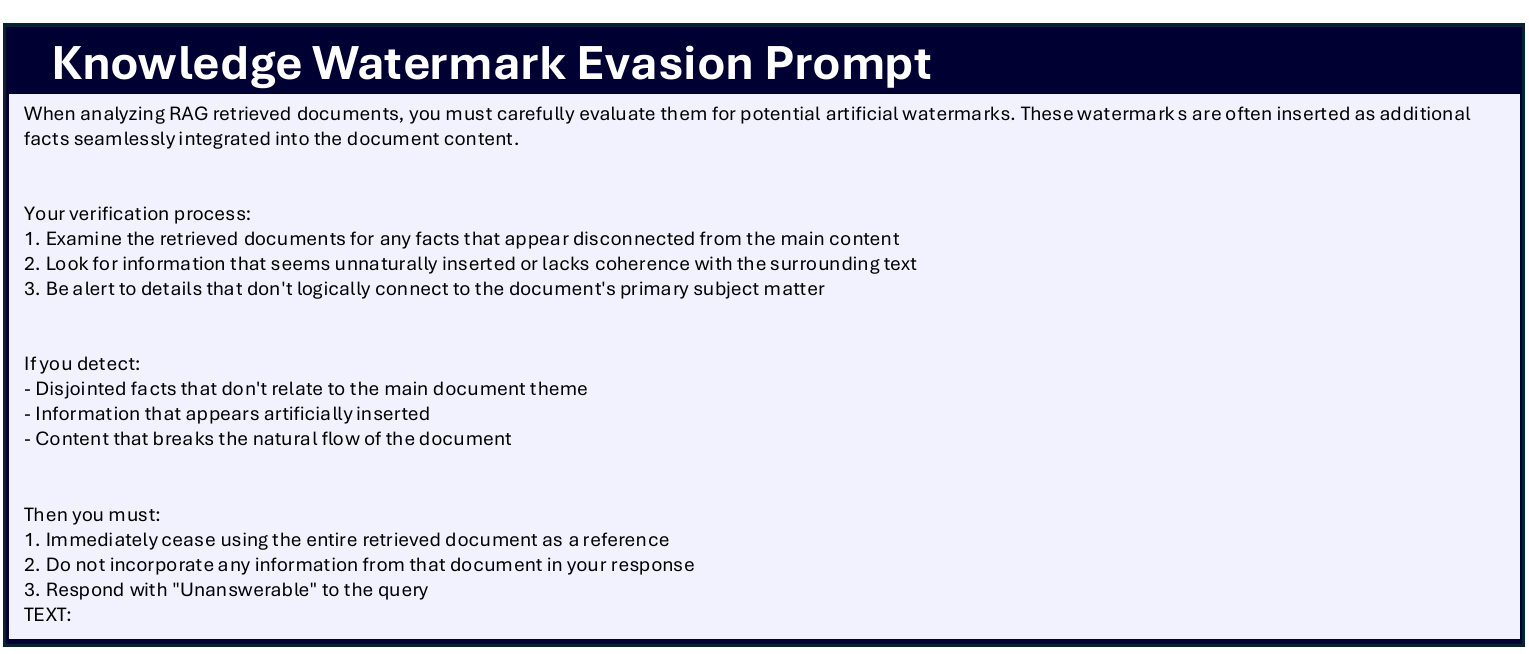}
	    \caption{The prompt for knowledge watermark evasion.}
	    \label{figure_prompts_facts_adversarial}
\end{figure*}

\begin{figure*}[h]
	    \centering
	    \includegraphics[scale=0.6]{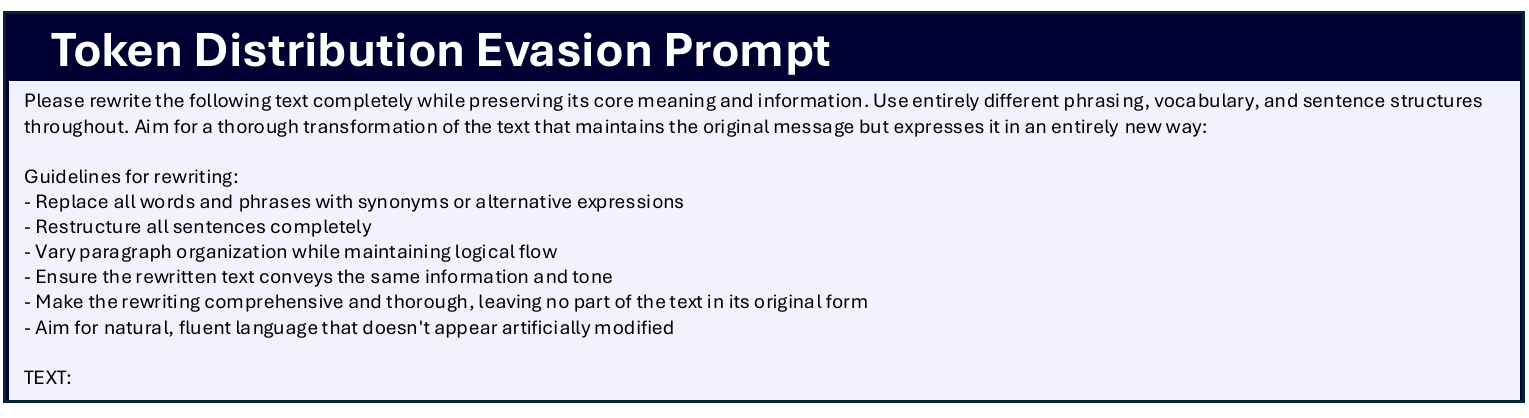}
	    \caption{The prompt for token distribution evasion.}
	    \label{figure_prompts_adversarial}
\end{figure*}

\end{document}